\newtheorem{proposition}{Proposition}
\newtheorem{lemma}{Lemma}
\newtheorem{corollary}{Corollary}
\theoremstyle{definition}
\newtheorem{example}{Example}
\newtheorem{definition}{Definition}
\newcommand{\complex}{\mathbb C} %complex
\renewcommand{\nat}{\mathbb N} %natural
\newcommand{\hi}{\mathcal{H}} %Hilbert space H
\newcommand{\hik}{\mathcal{K}} %Hilbert space K
\newcommand{\hv}{\mathcal{V}} %Hilbert space V
\newcommand{\lh}{\mathcal{L(H)}} %bounded linear operators
\newcommand{\lsh}{\mathcal{L}_s(\mathcal{H})} %selfadjoint linear operators
\newcommand{\lk}{\mathcal{L(K)}} %bounded linear operators on K
\newcommand{\lv}{\mathcal{L(V)}} %bounded linear operators on K
\newcommand{\sh}{\mathcal{S(H)}} %states on H
\newcommand{\shik}{\mathcal{S(K)}} %states on K
\newcommand{\shv}{\mathcal{S(V)}} %states on V
\newcommand{\eh}{\mathcal{E(H)}} %effects
\newcommand{\ip}[2]{\left\langle\,#1\,|\,#2\,\right\rangle} %inner product
\newcommand{\ket}[1]{|#1\rangle} %ket
\newcommand{\bra}[1]{\langle#1|} %bra
\newcommand{\kb}[2]{|#1\rangle\langle#2|} %ketbra
\newcommand{\tr}[1]{\mathrm{tr}\left[#1\right]} %trace
\newcommand{\obs}{\mathcal{O}}
\newcommand{\A}{\mathsf{A}}%generic observable
\newcommand{\B}{\mathsf{B}}%generic observable
\newcommand{\T}{\mathsf{T}}
\newcommand{\C}{\mathcal{C}}
\newcommand{\D}{\mathcal{D}}
\newcommand{\E}{\mathcal{E}}
\newcommand{\G}{\mathcal{G}}
\newcommand{\Q}{\mathcal{Q}}
\newcommand{\R}{\mathcal{R}}
\newcommand{\I}{\mathcal{I}}
\newcommand{\J}{\mathcal{J}}
\newcommand{\K}{\mathcal{K}}
\newcommand{\ins}{\mathrm{Ins}}
\newcommand{\simu}[1]{\mathfrak{sim}(#1)} %simulation
\begin{document}

\title[]{Post-processing of quantum instruments}

\author{Leevi Lepp\"{a}j\"{a}rvi}
\email{leille@utu.fi}
\affiliation{QTF Centre of Excellence, Turku Centre for Quantum Physics, Department of Physics and Astronomy, University of Turku, Turku 20014, Finland}

\author{Michal Sedl\'ak}
\email{michal.sedlak@savba.sk}
\affiliation{RCQI, Institute of Physics, Slovak Academy of Sciences, D\'ubravsk\'a cesta 9, 84511 Bratislava, Slovakia}
\affiliation{Centre of Excellence IT4Innovations, Faculty of Information Technology, Brno University of Technology, Bo\v zet\v echova 2/1, 612 00 Brno, Czech Republic}

\pacs{03.65.Ta, 03.65.Aa}

% 03.65.Ta    Foundations of quantum mechanics; measurement theory
% 03.65.Aa    Quantum systems with finite Hilbert space

\begin{abstract}
Studying sequential measurements is of the utmost importance to both the foundational aspects of quantum theory and the practical implementations of quantum technologies, with both of these applications being abstractly described by the concatenation of quantum instruments into a sequence of certain length. In general, the choice of instrument at any given step in the sequence can be conditionally chosen based on the classical results of all preceding instruments. For two instruments in a sequence we consider the conditional second instrument as an effective way of post-processing the first instrument into a new one. This is similar to how a measurement described by a positive operator-valued measure (POVM) can be post-processed into another by way of classical randomization of its outcomes using a stochastic matrix. In this work we study the post-processing relation of instruments and the partial order it induces on their equivalence classes. We characterize the greatest and the least element of this order, give examples of post-processings between different types of instruments and draw connections between post-processings of some of these instruments and their induced POVMs.
\end{abstract}

\maketitle

%%%%%%%%%%%%%%%%%%%%%%%%%%%%%%%%%%%%%%%%%%%
\section{Introduction}
%%%%%%%%%%%%%%%%%%%%%%%%%%%%%%%%%%%%%%%%%%%
Measurements are the most essential part of every physical theory as they are, by definition, the only way to extract information from the objects that the theory describes. Detailed understanding and characterization of measurements may reveal crucial features of the theory. This is especially true in quantum theory where the discoveries of phenomena such as inevitable information/disturbance trade-off or intrinsic randomness of measurement outcomes has made our (classical) intuition imprecise and it renders quantum theory as puzzling to most physicist even after hundred years.

With the bloom of the field of quantum computation and information processing measurements are viewed not only as a mere tool for making predictions and verification of the theory, but they are also used as the key element assuring the implementation of the given task. For example, in quantum teleportation \cite{teleportation} they enable transmission of quantum information, but they can act as a step in quantum computation as well \cite{mbcomp}, or as the decoder of classical information in its transmission via noisy quantum channels \cite{helstrom, HolevoBook}.

Most often quantum measurements are considered in two ways. If we are interested only in the classical outcome produced by the quantum measurement then it is conveniently described by a positive operator-valued measure (POVM) \cite{BuschLahtiMittelstaedtBook}. However, if the measured quantum system will be part of any further interaction then we shall use quantum instrument \cite{BuschLahtiMittelstaedtBook} to describe also the state change induced by the measurement. Especially if we want consider evolutions, where measurements of the same system happen at different times, description of sequential measurements by quantum instruments becomes really practical.

Sequences of quantum measurements are at the same time very fundamental and application fruitful object to study. They can be used for example for quantum state tomography \cite{CaHeTo12, SNFilippov1} and estimation, property testing \cite{HarrowPropTest}, computation \cite{acinComp1}, quantum sequential decoding \cite{seqdecWilde}, joint measurability \cite{HeMiZi16} or sequential state discrimination \cite{bergou1}.

Let us illustrate the practical relevance of sequences of measurements with two of the previously mentioned applications. First, suppose we would be repeating the same instrument in the sequence. If the instrument corresponds to a least disturbing realization of a non-degenerate projective measurement (its L\"uder's instrument) then such a repetition would not be useful at all, because the sequence of outcomes would be just a repetition of the first measurement outcome. However, in practical implementations measurements are not ideal and will always be noisy. Recently, in \cite{HHKsaturation} the authors discovered that repetition of such noisy measurement can lead to an effective measurement with suppressed noise level. Secondly, the authors of \cite{HeMi15uni} showed that any jointly measurable pair of observables can be jointly measured by a L\"uder's instrument of the first observable and a subsequent measurement of its output by a suitably chosen measurement. Hence, we can also say that joint measurability can be operationally realized via sequential measurements.

In the current work we study a (concatenated) sequence of two quantum instruments, i.e. two quantum measurements having both classical and quantum output such that the classical outcome of the first instrument determines the second quantum instrument that will further process the quantum output of the first instrument. Thus, the concatenation of the first quantum instrument with the conditionally selected second instrument leads to a new compound quantum instrument. This can be viewed as a post-processing of the first instrument into the resulting compound quantum instrument. If one fixes the first quantum instrument and considers all possible post-proceesing quantum instruments one can understand how an action of the first already performed instrument can be altered or modified. In particular, one may also try to answer when it can be completely reversed.

The post-processing relation can be used to define a partial order between (equivalence classes of) quantum instruments: if an (equivalence class of an) instrument can be post-processed from some other (equivalence class of an) instrument, then the former must be greater than the latter. From the resource theory perspective this can be interpreted as statement that the resulting instrument is less useful or less of a resource than the instrument that is being post-processed. Special case of single outcome quantum instruments corresponds to post-processing of quantum channels, which was previously studied in \cite{HeMi13}.

The post-processing that we define is analogous to the case of post-processing of POVMs, where deterministic post-processing of the classical outcomes of a POVM lead to a new POVM. In the case of quantum instruments, instead of just post-processing the classical outcome we should also consider the post-processing of the post-measurement state and in general these two aspects do not need to be independent. Thus, we are not restricted to changing the classical outcome by a classical post-processing matrix and modifying the post-measurement state by a quantum channel. In general, we can choose a different post-processing quantum instrument for every outcome of the original instrument.

Study of the partial order induced by the post-processing relation is one of the main goals of this manuscript. For general quantum instruments we characterize the least and the greatest element of the partial order and characterize their equivalence classes. We give examples of post-processings of various types of instruments and draw connections between post-processing of quantum instruments and post-processing of their induced POVMs. Finally, we apply the post-processing as a part of the simulation scheme for instruments, where in addition to post-processing a given set of instruments we are also allowed to classically mix them. We note that quantum instruments that map to one-dimensional output space are isomorphic to POVMs. Thus, we must recover POVM simulability \cite{GuBaCuAc17, OsGuWiAc17, FiHeLe18, OsMaPu19} as a special case of our results. In this sense our work can be also seen as a generalization of POVM simulability.

The rest of the manuscript is organized as follows. In Sec. \ref{sec:prel} we introduce the relevant concepts and notation. In Sec. \ref{sec:POVMs} we first consider the post-processing of POVMs  before generalizing this concept to instruments in Sec. \ref{sec:instruments}. In this section we also analyze the structure given by the post-processing relation. In Sec. \ref{sec:induced-POVM} we show connections between the post-processings of instruments and their induced POVMs.  Finally, in Sec. \ref{sec:simulation} we focus on the concept of simulation of instruments using the post-processing relation defined in Section \ref{sec:instruments}. Our findings are summarized in Sec. \ref{sec:summary}.

%%%%%%%%%%%%%%%%%%%%%%%%%%%%%%%%%%%%%%%%%%%

\section{Preliminaries and notation} \label{sec:prel}
%%%%%%%%%%%%%%%%%%%%%%%%%%%%%%%%%%%%%%%%%%%
Let $\hi$ be a finite-dimensional complex Hilbert space. We denote by $\lh$ the set of bounded operators on $\hi$ and by $\lsh$ the set of selfadjoint operators in $\lh$. States of a quantum system are described by positive semi-definite operators on $\hi$ with unit trace. We denote the set of quantum states on $\hi$ by $\sh$ so that
\begin{equation*}
\sh= \{ \varrho \in \lsh \, | \, \varrho \geq 0, \ \tr{\varrho} =1 \},
\end{equation*}
where $0$ is the zero operator.

The set of effects on $\hi$, denoted by $\eh$, consists of selfadjoint operators on $\hi$ bounded by $0$ and $I$, where $I$ (or $I_\hi$ if we want to be more specific) is the identity operator on $\hi$, i.e.,
\begin{equation*}
\eh = \{E \in \lsh \, | \, 0 \leq E \leq I \}.
\end{equation*}
Observable with a finite number of outcomes is described by a \textit{positive operator-valued measure} (POVM), i.e. by mapping $\A: x \mapsto \A(x)$ from a finite outcome set $\Omega$ to $\eh$ such that $\sum_{x \in \Omega} \A(x) = I$. The set of observables on $\hi$ with outcome set $\Omega$ is denoted by $\obs(\Omega, \hi)$.

Let $\hi$ and $\hik$ be Hilbert spaces. Transformations of states on $\hi$ to states on $\hik$ are described by \textit{quantum channels}, i.e., completely positive trace-preserving maps from $\lh$ to $\lk$. Probabilistic transformations are described by \textit{quantum operations}, i.e., completely positive trace-nonincreasing maps.

Quantum channels and operations have a well-known represention in an operator-sum form: a linear map $\mathcal{N}: \lh \to \lk$ is a quantum operation if and only if there exists bounded operators $K_i: \hi \to \hik$ for all $i=1,2, \ldots $ such that $\mathcal{N}(\varrho) = \sum_i K_i \varrho K_i^*$ for all $\varrho \in \lh$ and
$\sum_i K_i^* K_i \leq I$. The operators $K_i$ are called \textit{Kraus operators} of $\mathcal{N}$ and in the finite-dimensional case it is possible to choose $\dim(\hi) \dim(\hik)$ or fewer Kraus operators. The minimal number of Kraus operators for a given operation is called the \textit{Kraus rank} of the operation.

\textit{Quantum instrument} describes a device that takes a quantum input state and gives a quantum output state conditioned on a classical measurement outcome. Formally, a quantum instrument $\I$ is a mapping $\I: x \mapsto \I_x$ from a finite outcome set $\Omega$ to the set of operations such that $\sum_{x \in \Omega} \I_x $ is a quantum channel. Given an input $\varrho \in \sh$, the (unnormalized) conditional output state is then described by $\I_x(\varrho)$ when we get the outcome $x$ in the measurement of the \emph{induced POVM} $\A^\I \in \obs(\Omega, \hi)$ described by the probabilities $\tr{\A^\I(x)\varrho} = \tr{\I_x(\varrho)}$. In general it is clear that different instruments can have the same induced POVM but that for any given instrument the induced POVM is unique. The set of instruments from $\lh$ to $\lk$ with outcome set $\Omega$ is denoted by $\ins(\Omega, \hi, \hik)$. In the case when the input and the output spaces are the same, $\hik = \hi$, we denote the set simply $\ins(\Omega, \hi)$.

\begin{example}
We introduce the \emph{identity instrument} $id: \lh \to \lh$ as a 1-outcome instrument that leaves the state unchanged, i.e., $id(\varrho) = \varrho$ for all $\varrho \in \sh$. Because this instrument has only one outcome, it is in fact a channel, and thus we also refer to it just as the identity channel. One immediately sees that the identity channel is reversible and in fact one can consider it as a special case in the class of \emph{unitary channels} $\mathcal{U}: \lh \to \lh$, which are reversible channels defined by some unitary operator $U$ on $\hi$ as $\mathcal{U}(\varrho) = U \varrho U^*$ for all $\varrho \in \sh$.
\end{example}

\begin{example}
Another class of 1-outcome instruments (i.e. channels) are the \emph{trash-and-prepare channels} $\mathcal{T}: \lh \to \lk$ that are defined as $\mathcal{T} (\varrho) = \tr{\varrho}\xi$ for all $\varrho \in \lh$ for some fixed state $\xi$. Trash-and-prepare channels are also sometimes called complete state-space contractions since they just ignore the input state and prepare a new fixed state, i.e., the whole state space is contracted into a single point.

One can also consider trash-and-prepare instruments with more outcomes simply by trashing the input state, rolling a dice and preparing a new state based on the outcome of the dice roll. Thus, we can have a trash-and-prepare instrument $\mathcal{T}$ with outcome set $\Omega$ defined as $\mathcal{T}_x(\varrho)= \tr{\varrho} p_x \xi_x$ for all $x \in \Omega$, where $(p_x)_x$ is some probability distribution over $\Omega$ and $\{\xi_x\}_x$ is a set of states. Clearly the channel corresponding to this instrument is a trash-and-prepare channel that outputs the mixed state $\xi = \sum_{x \in \Omega} p_x \xi_x$. Thus, it can be seen as a convex mixture of other trash-and-prepare channels.
\end{example}

\begin{example}
Instead of preparing a new state just by trashing the input state, one can also perform a (demolishing) measurement on the input and then prepare a new state according to the measurement outcome. Thus, if $\A \in \obs(\Omega, \hi)$ is a POVM, we can define a \emph{measure-and-prepare instrument} $\mathcal{P}^\A \in \ins(\Omega, \hi, \hik)$ for some set of states $\{\xi_x\}_{x \in \Omega}$ as $\mathcal{P}^\A_x(\varrho) = \tr{\A(x)} \xi_x$ for all $\varrho \in \sh$. Note that by changing the set $\{\xi_x\}_{x \in \Omega}$, one can use the same POVM $\A$ to define countless measure-and-prepare instruments. One also sees that the trash-and-prepare instrument is a special case of a measure-and-prepare instrument where one just fixes the POVM to be trivial, i.e., $\A(x) = p_x I$ for all $x \in \Omega$ for some probability distribution $(p_x)_x$ over $\Omega$.
\end{example}

\section{Post-processing of POVMs} \label{sec:POVMs}
Before generalizing the concept of post-processing to instruments, we recall some important results for POVMs. After obtaining the outcome statistics of a measurement of an observable, one may want to process the obtained information. One can, for instance, see if it is possible to reveal some other property of the system by manipulating the data and obtain the outcome statistics of some other observable. This is what is usually called the post-processing of observables.

\subsection{The post-processing partial order}

We can formalize the previous paragraph with the following definition.

\begin{definition}
Let $\A \in \obs(\Omega_\A, \hi)$ and $\B \in \obs(\Omega_\B, \hi)$ be observables. If there exists a stochastic matrix $\nu = \left( \nu_{xy} \right)_{x \in \Omega_\A, y \in \Omega_\B}$,  i.e., $\nu_{xy} \geq 0$ for all $x \in \Omega_\A$, $y \in \Omega_\B$, and $\sum_{y \in \Omega_\B} \nu_{xy} =1 $ for all $x \in \Omega_\A$, such that
\begin{equation*}
\B(y) = \sum_{x \in \Omega_\A} \nu_{xy} \A(x)
\end{equation*}
for all $y \in \Omega_\B$, we say that $\B$ is a \textit{post-processing} of $\A$ and denote it $\A \to \B$. Furthermore, we say that observables $\A$ and $\B$ are \textit{post-processing equivalent}, denoted by $\A \leftrightarrow \B$, if $\A \to \B$ and $\B \to \A$.
\end{definition}

\begin{example}
A special kind of post-processing, called \emph{relabeling}, is one where all the elements of the stochastic post-processing matrix are either $0$ or $1$. Following  \cite{HaHeMi18}, this can be formalized by the existence of a function $f: \Omega_\A \to \Omega_\B$ such that $\nu_{xy} = \delta_{f(x),y}$, where $\delta_{x,x'}$ is the Kronceker delta, so that
\begin{equation*}
\B(y) = \sum_{x \in f^{-1}(y)} \A(x)
\end{equation*}
for all $y \in \Omega_\B$. In this case, we say that $\B$ is a relabeling of $\A$ and that $\A$ is a \emph{refinement} of $\B$.
\end{example}

Post-processing captures the idea that the outcome statistics of $\B$ can be deterministically obtained from the statistics of $\A$ by some classical process represented by the stochastic matrix. It is easy to see that post-processing induces a preorder on the set of all observables on $\hi$, and by extending it to the equivalence classes of post-processing equivalent observables it becomes a partial order. A natural thing to consider is whether there exist a least or greatest element with respect to this order.

It is easy to see that the trivial observables, i.e., observables $\T^p \in \obs(\Omega_{\T^p}, \hi)$ of the form $\T^p(x) = p_x I$ for all $x \in \Omega_{\T^p}$ for some probability distribution $(p_x)_x$ on $\Omega_{\T^p}$, can be post-processed from any other observable $\A \in \obs(\Omega_\A, \hi)$ by using the post-processing matrix $\nu$ with $\nu_{xy} = p_x$ for all $x \in \Omega_{\T^p}$  and $y \in \Omega_\A$. Thus, $\A \to \T^p$ for any observable $\A$ and any probability distribution $p$. Furthermore, if $\B$ is an observable such that $\T^p \to \B$ with a post-processing $\mu$, then also $\B$ is a trivial observable, $\B = \T^q$, where $q_z = \sum_{x \in \Omega_{\T^p}} \mu_{xz} p_x$ for all $z \in \Omega_\B$. Thus, the equivalence class of trivial observables is the least element with respect to the partial order.

How about the greatest element? Turns out that there is no greatest element \cite{HeMi13, MaMu90}. Instead, we get a class of maximal elements that we call post-processing clean observables.

\begin{definition}
An observable $\A$ is \emph{post-processing clean} if for any observable $\B$ such that $\B \to \A$ we also have $\A \to \B$.
\end{definition}

The post-processing clean observables were characterized in \cite{MaMu90, BuDAKePeWe05}: an observable is post-processing clean if and only if it is rank-1, i.e., each of its effects is a rank-1 operator. For a POVM $\A \in \obs(\Omega_\A, \hi)$, being rank-1 is equivalent to being \emph{indecomposable} \cite{KiNuIm10}, i.e., if any of its non-zero effect $\A(x)$ is decomposed as a sum of some two effects on $\hi$ so that $\A(x) = E_x +F_x$ for some $E_x,F_x \in \eh$, then there exist positive numbers $e_x,f_x >0$ such that $\A(x) = e_x E_x = f_x F_x$. It holds that any observable can be post-processed from a post-processing clean observable. Thus, rank-1 POVMs are in fact the maximal elements with respect to the post-processing partial order as everything else can be post-processed from them.

\subsection{Minimally sufficient POVMs}
Post-processing can thus be seen as a way to construct new observables out of existing ones by a classical process. Another way to look at post-processing is to say that if we have $\A \to \B$ for two observables $\A$ and $\B$, then $\A$ must be more informative as $\B$ can be deduced from $\A$. But as was pointed out earlier, when we talk about the partial order induced by the post-processing relation, we are actually comparing equivalence classes of observables. Then especially post-processing equivalent observables would be just as informative. However, even though they can be seen as having the same information, the following notion introduced in \cite{Kuramochi15} captures the idea that even in the same equivalence class there are observables with minimum informational redundancy:

\begin{definition}
An observable $\A$ is \emph{minimally sufficient} if, whenever $\A \leftrightarrow \B$ with some observable $\B$, then $\B$ is a refinement of $\A$.
\end{definition}

It was shown in \cite{Kuramochi15} that a (discrete) POVM $\A \in \obs(\Omega_\A, \hi)$ is minimally sufficient if and only if it is \emph{non-vanishing}, i.e., $\A(x) \neq 0$ for all $x \in \Omega_\A$, and it is \emph{pairwise linearly independent}, i.e., $\A(x) \neq c \A(y)$ for any $c >0$ for all $x \neq y$, $x,y \in \Omega_\A$. Furhtermore, for any POVM $\A$, there exists a minimally sufficient POVM $\tilde{\A}$ such that $\A \leftrightarrow \tilde{\A}$, and $\tilde{\A}$ is unique up to a bijective relabeling of its outcomes.

The minimally sufficient representative of the equivalence class of a POVM $\A$ can be constructed as follows: define an equivalence relation $\sim$ in $\Omega_\A$ so that $x \sim y$ if and only if there exists $c>0$ such that $\A(x) = c \A(y)$. We denote the set of equivalence classes $\Omega_\A /\hspace*{-0.1cm}\sim $ by $\tilde{\Omega}_\A$ and define a minimally sufficient POVM $\tilde{\A} \in \obs(\tilde{\Omega}_\A, \hi)$ that is post-processing equivalent to $\A$ by
\begin{equation*}
\tilde{\A}([y]) = \sum_{x \in [y]} \A(x), \quad \quad [y] \in \tilde{\Omega}_\A.
\end{equation*}

The uniqueness of the pairwise linearly independent minimally sufficient representative can be used to characterize the whole post-processing equivalence class: two POVMs $\A $ and $\B $ are post-processing equivalent if and only if the pairwise linearly independent POVMs $\tilde{\A}$ and $\tilde{\B}$ are bijective relabelings of each other. This shows that the effects of two post-processing equivalent POVMs must be proportional to each other.

\begin{proposition}\label{prop:pp-equivalent-POVMs}
Let $\A \in \obs(\Omega_\A, \hi)$ and $\B \in \obs(\Omega_\B, \hi)$ be two post-processing equivalent non-vanishing POVMs. Then for all $x \in \Omega_\A$ there exists $y_x \in \Omega_\B$ and $c_{xy_x}>0$ such that $\A(x) = c_{xy_x} \B(y_x)$. Furthermore, there exist post-processings $\nu$ for $\B \to \A$ and $\mu$ for $\A \to \B$ such that $\nu_{yx}, \mu_{xy} \neq 0$ only if $\B(y)$ is proportional to $\A(x)$.
\end{proposition}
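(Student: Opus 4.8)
The plan is to reduce everything to the uniqueness of the pairwise linearly independent minimally sufficient representative recalled above, and then to read off the required post-processings from the resulting block structure. First I would pass to the minimally sufficient representatives $\tilde{\A}\in\obs(\tilde{\Omega}_\A,\hi)$ and $\tilde{\B}\in\obs(\tilde{\Omega}_\B,\hi)$ built from the proportionality relation $\sim$, so that $\tilde{\A}([x])=\sum_{x'\in[x]}\A(x')$; since each $\A(x')$ with $x'\sim x$ is a \emph{positive} multiple of $\A(x)$ and $\A$ is non-vanishing, one has $\tilde{\A}([x])=d_{[x]}\A(x)$ with $d_{[x]}>0$, and likewise $\tilde{\B}([y])=d'_{[y]}\B(y)$ with $d'_{[y]}>0$. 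As $\A\leftrightarrow\tilde{\A}$, $\B\leftrightarrow\tilde{\B}$ and $\A\leftrightarrow\B$, transitivity of the post-processing preorder gives $\tilde{\A}\leftrightarrow\tilde{\B}$, and the cited uniqueness result then provides a bijection $g\colon\tilde{\Omega}_\A\to\tilde{\Omega}_\B$ with $\tilde{\A}([x])=\tilde{\B}(g([x]))$ for every $[x]$; write $P_{[x]}:=\tilde{\A}([x])=\tilde{\B}(g([x]))$, a nonzero effect.

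The first assertion is then immediate: given $x\in\Omega_\A$, pick any $y_x\in\Omega_\B$ in the class $g([x])$, so $\A(x)=d_{[x]}^{-1}P_{[x]}=d_{[x]}^{-1}\tilde{\B}(g([x]))=\bigl(d_{[x]}^{-1}d'_{g([x])}\bigr)\B(y_x)$ and $c_{xy_x}:=d_{[x]}^{-1}d'_{g([x])}>0$. To set up the second assertion I would record the block picture this produces: writing $\Omega_\A=\bigsqcup_j A_j$ and $\Omega_\B=\bigsqcup_j B_j$ for the $\sim$-classes paired by $g$, we get $\A(x)\propto P_j\propto\B(y)$ with positive constants whenever $x\in A_j$ and $y\in B_j$, say $\A(x)=a_xP_j$ and $\B(y)=b_yP_j$ with $a_x,b_y>0$ (non-vanishing) and $\sum_{x\in A_j}a_x=\sum_{y\in B_j}b_y=1$, the latter because $\sum_{x\in A_j}\A(x)=P_j=\sum_{y\in B_j}\B(y)$.

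For the second assertion I would simply exhibit the matrices. Put $\mu_{xy}:=b_y$ when $x$ and $y$ lie in corresponding blocks ($x\in A_j$, $y\in B_j$ for the same $j$) and $\mu_{xy}:=0$ otherwise, and symmetrically $\nu_{yx}:=a_x$ when $x\in A_j$, $y\in B_j$ and $\nu_{yx}:=0$ otherwise. Each is stochastic since $\sum_{y\in B_j}b_y=1$ and $\sum_{x\in A_j}a_x=1$; for $y\in B_j$ one computes $\sum_x\mu_{xy}\A(x)=b_y\sum_{x\in A_j}\A(x)=b_yP_j=\B(y)$, so $\A\to\B$ via $\mu$, and $\sum_y\nu_{yx}\B(y)=a_x\sum_{y\in B_j}\B(y)=a_xP_j=\A(x)$, so $\B\to\A$ via $\nu$. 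By construction $\mu_{xy}$ and $\nu_{yx}$ vanish unless $x$ and $y$ lie in corresponding blocks, in which case $\A(x)$ and $\B(y)$ are both positive multiples of $P_j$ and hence proportional, which is exactly the stated support condition.

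The one genuinely nontrivial ingredient is the invocation of the uniqueness of the pairwise linearly independent minimally sufficient representative: this is what upgrades the bare hypothesis $\A\leftrightarrow\B$ into the rigid block structure (common effects $P_j$ with matching row/column normalizations), after which both claims are bookkeeping. The only place needing a little care is the repeated use of the non-vanishing assumption — first to ensure the coefficients $d_{[x]},d'_{[y]}$ linking each effect to its merged class effect are strictly positive, and then to ensure $a_x,b_y>0$ so that the constructed $\mu,\nu$ are bona fide stochastic matrices with the required support.
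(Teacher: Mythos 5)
Your proof is correct and follows essentially the same route as the paper's: both pass to the minimally sufficient representatives, invoke the uniqueness of the pairwise linearly independent representative to obtain the bijection between proportionality classes, and then define the post-processing matrices by distributing each block effect according to the normalized coefficients (your $\nu_{yx}=a_x$ on matching blocks is exactly the paper's $\nu_{yx}=c_x\,\delta_{f([x]),[y]}$). The block-structure bookkeeping you add is a clean way of organizing the same argument.
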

\begin{proof}
We define the pairwise linearly independent POVMs $\tilde{\A} \in \obs(\tilde{\Omega}_\A, \hi)$ and $\tilde{\B} \in \obs(\tilde{\Omega}_\B, \hi)$ as above, so that
\begin{equation*}
\tilde{\A}([x]) = \sum_{x' \in [x]} \A(x'), \quad \tilde{\B}([y]) = \sum_{y' \in [y]} \B(y')
\end{equation*}
for all $[x] \in \tilde{\Omega}_\A$ and $[y] \in \tilde{\Omega}_\B$. Thus, for all $x' \in [x]$, we have that $\A(x') = c_{x'} \tilde{\A}([x])$ for some $c_{x'}\in (0,1]$ such that $\sum_{x' \in [x]} c_{x'} = 1$. Similarly, for all $y' \in [y]$, we have that $\B(y') = d_{y'} \tilde{\B}([y])$ for some $d_{y'}\in (0,1]$ such that $\sum_{y' \in [y]} d_{y'} = 1$.

As was mentioned earlier, the minimally sufficient representative is essentially unique in each equivalence class, so that since $\tilde{\A} \leftrightarrow \A \leftrightarrow \B \leftrightarrow \tilde{\B}$ and since $\tilde{\A}$ and $\tilde{\B}$ are both minimally sufficient, there exists a bijective map $f: \tilde{\Omega}_\A \to \tilde{\Omega}_\B$ such that $\tilde{\A}([x]) = \tilde{\B}(f([x]))$ for all $[x] \in \tilde{\Omega}_\A$. Thus, for each $x \in \Omega_\A$ and $y_x \in f([x]) \subset \tilde{\Omega}_\B$ we have that
\begin{equation*}
\A(x) = c_x \tilde{\A}([x]) = c_x \tilde{\B}(f([x])) = c_x \tilde{\B}([y_x]) = c_{xy_x} \B(y_x),
\end{equation*}
where we have denoted $c_{xy_x} = c_x/d_{y_x} >0$.

For the second part of the claim, let us define $\nu_{yx} = c_x \delta_{f([x]), [y]}$ for all $x \in \Omega_\A$ and $y \in \Omega_\B$. Clearly for all $y \in \Omega_\B$ we have that
\begin{align*}
\sum_{x \in \Omega_\A} \nu_{yx} = \sum_{x \in f^{-1}([y])} c_x = 1
\end{align*}
since $f^{-1}([y]) = [x']$ for some $[x'] \in \tilde{\Omega}_\A$ for all $y \in \Omega_\B$. Furthermore,
\begin{align*}
\sum_{y \in \Omega_\B} \nu_{yx} \B(y) &= \sum_{y \in f([x])} c_x \B(y) = c_x \tilde{\B}(f([x])) = \A(x)
\end{align*}
for all $x \in \Omega_\A$. Thus, $\nu$ is a post-processing for $\B \to \A$ that has $\nu_{yx} \neq 0$ only if $\B(y)$ is proportional to $\A(x)$. The post-processing $\mu$ for $\A \to \B$ can be defined analogously.
\end{proof}

We note that not all POVMs whose effects are proportional to each other are post-processing equivalent. For example, let us define two 4-outcome qubit POVMs $\A$ and $\B$ as
\begin{align*}
\A(1) &= \frac{1}{2} \kb{\varphi_1}{\varphi_1}, \quad \A(2) = \frac{1}{2} \kb{\varphi_2}{\varphi_2}, \\
\A(3) &= \frac{1}{2} \kb{\psi_1}{\psi_1}, \quad \A(4) = \frac{1}{2} \kb{\psi_2}{\psi_2}, \\
\B(1)& = \frac{1}{3} \kb{\varphi_1}{\varphi_1}, \quad \B(2) = \frac{1}{3} \kb{\varphi_2}{\varphi_2}, \\
\B(3) &= \frac{2}{3} \kb{\psi_1}{\psi_1}, \quad \B(4) = \frac{2}{3} \kb{\psi_2}{\psi_2},
\end{align*}
where $\{\varphi_1, \varphi_2\}$ and $\{\psi_1, \psi_2\}$ are two orthonormal bases in $\complex^2$. We see that although the effects of $\A$ and $\B$ are proportional, they cannot be post-processed from each other since for instance $\A(1) = 3/2 \ \B(1)$ and $\B(3) = 4/3 \ \A(3)$, where obviously the post-processing elements $3/2$ and $4/3$ would be larger than one. Instead, as required by the bijective relabeling of two minimally sufficient representatives, two POVMs whose effects are proportional to each other are post-processing equivalent if their pairwise linearly dependent effects sum up to the same effect for both observables. For the above example this would mean that the coefficients of the rank-1 projectors would have to be the same for $\A$ and $\B$.

\section{Post-processing of instruments} \label{sec:instruments}
We can now define post-processing of instruments analogously to the post-processing of POVMs but unlike in the case of POVMs we are not only processing classical information but we must also process the output state of the instrument.

\begin{definition}\label{def:pp-instruments}
Let $\I \in \ins(\Omega, \hi, \hik)$ and $\J\in \ins(\Lambda, \hi, \hv)$ be quantum instruments. If there exists a set of instruments $\{\R^{(x)}\}_{x\in \Omega} \subset \ins(\Lambda, \hik, \hv) $ such that
\begin{equation}\label{eq:simulation}
\J_y(\varrho) = \sum_{x \in \Omega} \R^{(x)}_y \left( \I_x(\varrho) \right)
\end{equation}
for all $\varrho \in \lh$ and $y \in \Lambda$, then we denote $\I \to \J$ and say that $\J$ is a \emph{post-processing} of $\I$. Furthermore, we say that $\I$ and $\J$ are \emph{post-processing equivalent}, denoted by $\I \leftrightarrow \J$, if $\I \to \J$ and $\J \to \I$.
\end{definition}

\begin{figure}
\centering
\includegraphics[scale=0.32]{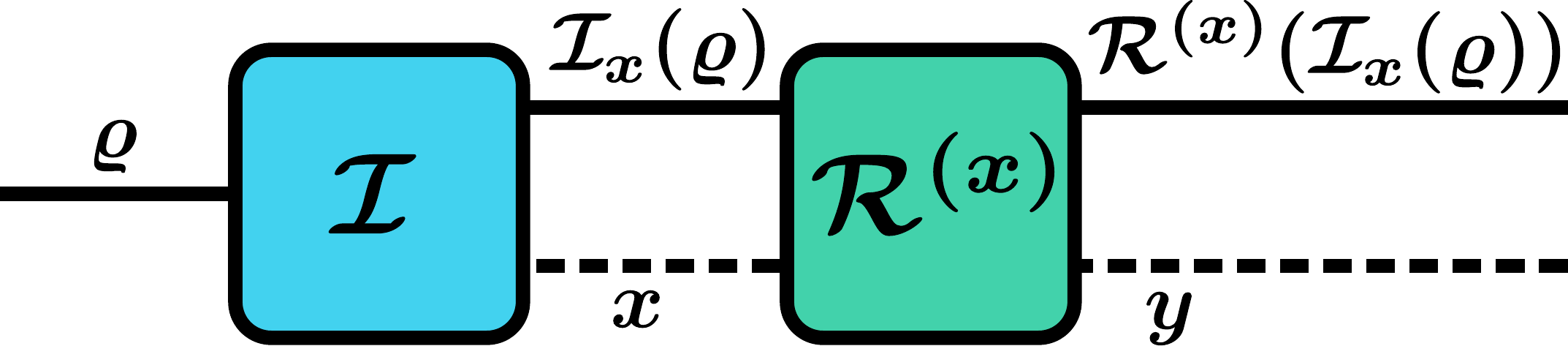} 
\caption{\label{fig:1} The post-processing of instrument $\I$ by the instruments $\R^{(x)}$. Instrument $\I$ takes quantum state $\varrho$ as an input and after measuring it and obtaining a measurement outcome $x$ the instrument $\R^{(x)}$ is chosen. The conditional output state of $\I$ (described by $\I_x(\varrho)$) serves as an input to $\R^{(x)}$, which measures it and produces an outcome $y$ after which the output is in the (unnormalized) state $\R^{(x)}_y(\I_x(\varrho))$. }
\end{figure}

The post-processing Eq. \eqref{eq:simulation} is depicted in Fig. \ref{fig:1}. As an illustrating example of post-processing of instruments, we consider what type of type of instruments can be post-processed from measure-and-prepare instruments.

\begin{example}\label{ex:m-a-p}
Let $\I \in \ins(\Omega, \hi, \hik)$ be a measure-and-prepare instrument with an induced POVM $\A$, i.e., it is of the form
\begin{align*}
\I_x(\varrho) &= \tr{\A(x) \varrho} \sigma_x
\end{align*}
for all $\varrho \in \sh$ and $x \in \Omega$ for some set of states $\{ \sigma_x\}_{x \in \Omega} \subset \shik$. Let $\J \in \ins(\Lambda, \hi, \hv)$ be an instrument such that $\I \to \J$ so that there exist instruments $\R^{(x)} \in \ins(\Lambda, \hik, \hv)$ such that $\J_y(\varrho)= \sum_x \R^{(x)}_y(\I_x(\varrho))$ for all $\varrho \in \sh$ and $y \in \Lambda$. By expanding the previous expression, we see that
\begin{align*}
 \J_y(\varrho) = \sum_{x \in \Omega} \tr{\A(x) \varrho} \R^{(x)}_y(\sigma_x)
\end{align*}
for all $y \in \Lambda$.
By denoting $\nu_{xy} = \tr{\R^{(x)}_y(\sigma_x)} \in [0,1]$ for all $x \in \Omega$ and $y \in \Lambda$, we can define $\xi_{xy} = \R^{(x)}_y(\sigma_x)/\nu_{xy} \in \shv$ when $\nu_{xy}\neq 0$ and $\xi_{xy} = \xi$ for some fixed $\xi \in \shv$ when $\nu_{xy} =0$. We see that $\sum_y \nu_{xy} =1$ for all $x \in \Omega$ so that $\nu$ is actually a valid post-processing. Thus, we have that
\begin{align*}
\J_y(\varrho) = \sum_{x \in \Omega} \tr{\nu_{xy} \A(x) \varrho} \xi_{xy}
\end{align*}
for all $y \in \Lambda$ and $\varrho \in \sh$. We note that the induced POVM $\A^\J$ of $\J$ is then a post-processing of $\A$ since by taking the trace on the last expression we see that $\A^\J(y) = \sum_{x \in \Omega} \nu_{xy} \A(x)$ for all $y \in \Lambda$ so that $\A = \A^\I \to \A^\J$.
\end{example}

We note that in the case of channels, Definition \ref{def:pp-instruments} reduces to the preorder given in \cite{HeMi13}: If $\C: \lh \to \lk$ and $\D: \lh \to \lv$ are two quantum channels such that there exists a channel $\E: \lk \to \lv$ such that $\D = \E \circ \C$, where $\circ$ denotes the composition of maps, then $\C \to \D$. Just as in the case of channels, the post-processing defined above gives a preorder in the set of instruments so that it can be used to define a partial order for the equivalence classes of instruments.

\subsection{Characterizing the greatest and the least element}

As in the case of POVMs, one of the first questions is to determine the greatest and the least element (if they exist) with respect to the post-processing partial order. Let us start with the greatest element, and as before let us first consider the maximal elements, namely, the post-processing clean instruments.

\begin{definition}
An instrument $\I$ is \emph{post-processing clean} if for any instrument $\J$ such that $\J \to \I$ we also have $\I \to \J$.
\end{definition}

Unlike in the case of POVMs, there is only one maximal element in the equivalence classes of instruments so that it must also be the greatest element. As expected, this is similar to the case of channels that was studied in \cite{HeMi13} and the greatest element is the same.

\begin{proposition}
An instrument is post-processing clean if and only if it is post-processing equivalent with the identity channel $id$.
\end{proposition}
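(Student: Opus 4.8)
The plan is to prove both directions of the equivalence. The ``if'' direction is the easy half: if $\I \leftrightarrow id$, then in particular $\J \to \I$ implies $\J \to id$ (by transitivity of the preorder), and then $id \to \I$ gives $\J \to \I \to \I$ — wait, more carefully, we want to show $\I \to \J$ whenever $\J \to \I$; but $\J \to \I \leftrightarrow id$ gives $\J \to id$, and we need $\I \to \J$, so it suffices to show $id \to \J$ for \emph{every} instrument $\J$, i.e. that $id$ is the greatest element. Indeed, for any $\J \in \ins(\Lambda,\hi,\hv)$ we can write $\J_y(\varrho) = \J_y(id(\varrho))$, so the single family $\R^{(1)} = \J$ (where $\Omega = \{1\}$) realizes $id \to \J$. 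Hence $id$ is post-processing clean, and anything equivalent to it is too.

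For the ``only if'' direction, suppose $\I \in \ins(\Omega,\hi,\hik)$ is post-processing clean. Since $id \to \I$ always holds (by the previous paragraph), cleanliness gives $\I \to id$, i.e. there exist instruments $\{\R^{(x)}\}_{x\in\Omega} \subset \ins(\{1\},\hik,\hi)$ — that is, channels $\R^{(x)} : \lk \to \lh$ — such that $\varrho = \sum_{x\in\Omega} \R^{(x)}(\I_x(\varrho))$ for all $\varrho$. Combined with $id \to \I$ this shows $\I \leftrightarrow id$, which is exactly what we want. So the whole content reduces to the single assertion: \textbf{$id \to \I$ for every instrument $\I$} (equivalently, $id$ is the greatest element). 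I would state and prove this as the core lemma, and the proposition then follows in two lines: $id \to \I$ always; if $\I$ clean then also $\I \to id$, hence $\I \leftrightarrow id$; conversely if $\I \leftrightarrow id$ then since $\J \to \I$ forces $\J \to id$ and $id \to \J$ always holds, we get $\I \to \J$.

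The proof of the core lemma is genuinely short and I would carry it out directly: given $\I \in \ins(\Omega,\hi,\hik)$, we must produce a single post-processing instrument family from the one-outcome instrument $id : \lh \to \lh$ into $\I$. Since $id$ has outcome set $\{1\}$, a post-processing consists of one instrument $\R^{(1)} \in \ins(\Omega,\hi,\hik)$ with $\I_y(\varrho) = \R^{(1)}_y(id(\varrho)) = \R^{(1)}_y(\varrho)$; taking $\R^{(1)} = \I$ works trivially. So really the only thing to check is that this is a legitimate instance of Definition~\ref{def:pp-instruments} — the input space of $\R^{(1)}$ must match the output space of $id$, which is $\hi$, and indeed $\I$ has input space $\hi$. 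No obstacle arises here.

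The one place that requires a word of care — and which I would expect to be the ``main obstacle'', though it is minor — is the definitional matching of Hilbert spaces and outcome sets across the two uses of post-processing in the ``if'' direction, specifically verifying that $\J \to \I$ together with $\I \leftrightarrow id$ really does chain to $\J \to id$ when the intermediate instrument $\I$ may have a nontrivial output space $\hik \ne \hi$. This is just transitivity of the preorder, which was asserted in the text right after Definition~\ref{def:pp-instruments}, so I would simply invoke it; if a referee wanted the composition spelled out, one composes the post-processing family witnessing $\J \to \I$ (if any such relation held — but here we want the other direction) — more precisely, from $\I \to id$ with family $\{\R^{(x)}\}$ and $id \to \J$ with the trivial family, one composes to get $\I \to \J$ directly, bypassing the need to go through $id$ at all. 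I would present it in whichever of these two equivalent ways keeps the index bookkeeping cleanest.
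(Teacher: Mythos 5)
Your proposal is correct and follows essentially the same route as the paper: both rest on the single observation that $id \to \I$ for every instrument $\I$ (via the trivial choice $\R^{(1)} = \I$), from which cleanliness of anything equivalent to $id$ follows by transitivity, and conversely any clean $\I$ must satisfy $\I \to id$ because $id \to \I$ always holds. The slight detour in your ``if'' direction resolves itself correctly into the same argument the paper gives, so there is nothing to fix.
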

\begin{proof}
First of all, it is clear that one can post-process every instrument with an input Hilbert space $\hi$ from the identity channel $id: \lh \to \lh$. Namely, if $\I \in \ins(\Omega, \hi, \hik)$ is any instrument, then $\I_x(\varrho) = \I_x(id(\varrho))$ for all $x \in \Omega$. By noting that the identity channel can be viewed as an instrument with only a single outcome, we see that the previous equation is of the same form as Eq. \eqref{eq:simulation}. Thus, $id \to \I$ for any instrument $\I$. Similarly, if $\J$ is an instrument that is post-processing equivalent with $id$ so that in particular $\J \to id$, then by the transitivity of the post-processing it follows that also $\J \to \I$ for any instrument $\I$. Moreover, if $\I$ is an instrument such that $\I \to \J$, then $\I$ must actually be post-processing equivalent with $\J$ (and with $id$) so that by definition $\J$ is post-processing clean.

Let then an instrument $\R \in \ins(\Omega, \hi, \hik)$ be post-processing clean. By the previous consideration, we have that $id \to \R$, so that because $\R$ is post-processing clean it follows that also $\R \to id$. Thus, any post-processing clean instrument is post-processing equivalent with $id$.
\end{proof}

From the proof of the previous Proposition we see that every instrument can be post-processed from the identity channel. We can conclude the following:

\begin{corollary}\label{cor:identity}
Every instrument can be post-processed from any instrument that is post-processing equivalent with the identity channel.
\end{corollary}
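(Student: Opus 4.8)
The plan is to reduce Corollary \ref{cor:identity} to the two facts already established in the preceding Proposition: that $id \to \I$ for every instrument $\I$ with the appropriate input space, and that post-processing is transitive (it is a preorder). So the corollary should follow by a one-line chaining argument, and the only real "work" is making sure the quantifiers and Hilbert-space bookkeeping are stated cleanly.

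Concretely, I would argue as follows. Let $\J$ be any instrument that is post-processing equivalent with the identity channel $id: \lh \to \lh$; in particular this forces the input space of $\J$ to be $\hi$ (since $\J \to id$ and $id$ is a channel on $\lh$, and since $id \to \J$). Now take an arbitrary instrument $\I \in \ins(\Omega, \hi, \hik)$ with the same input space $\hi$. By the proof of the previous Proposition we have $id \to \I$, because $\I_x(\varrho) = \I_x(id(\varrho))$ for all $x \in \Omega$ and all $\varrho \in \lh$, which is exactly an instance of Eq. \eqref{eq:simulation} with the single-outcome instrument $id$ playing the role of the family $\{\R^{(x)}\}$. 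On the other hand, $\J \leftrightarrow id$ gives $\J \to id$. Transitivity of $\to$ then yields $\J \to id \to \I$, hence $\J \to \I$. Since $\I$ was an arbitrary instrument on input space $\hi$, this says exactly that every instrument can be post-processed from $\J$, which is the claim.

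I would also remark (perhaps in one sentence) why transitivity holds, since it is used as a black box: if $\I \to \J$ via $\{\R^{(x)}\}_x$ and $\J \to \K$ via $\{\mathcal{S}^{(y)}\}_y$, then composing, $\K_z(\varrho) = \sum_y \mathcal{S}^{(y)}_z(\J_y(\varrho)) = \sum_{x,y} \mathcal{S}^{(y)}_z(\R^{(x)}_y(\I_x(\varrho)))$, and for each $x$ the family $y \mapsto \bigl(z \mapsto \sum_{y} \mathcal{S}^{(y)}_z \circ \R^{(x)}_y\bigr)$ --- more precisely, the instrument $\mathcal{T}^{(x)}$ with $\mathcal{T}^{(x)}_z = \sum_y \mathcal{S}^{(y)}_z \circ \R^{(x)}_y$ --- is again a quantum instrument (a sum of completely positive maps that is overall trace-preserving, because each $\R^{(x)}$ and each $\mathcal{S}^{(y)}$ is an instrument). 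This is the only point requiring a short verification, and even it is routine; the paper in fact already asserts the preorder property right after Definition \ref{def:pp-instruments}, so I could simply cite that.

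I do not anticipate any genuine obstacle here: the corollary is a direct packaging of statements already proved. The one thing to be careful about is the implicit matching of input Hilbert spaces --- "every instrument" must mean every instrument whose input space is $\hi$, since an instrument with a different input space cannot be a post-processing of $id: \lh \to \lh$ at all. I would make that restriction explicit (or note that it is inherited from the statement of the previous Proposition) so that the corollary is not misread as claiming something about instruments on unrelated input spaces.
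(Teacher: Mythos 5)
Your proposal is correct and follows essentially the same route as the paper: the paper derives the corollary from the observations, made in the proof of the preceding Proposition, that $id \to \I$ for every instrument $\I$ (via $\I_x = \I_x \circ id$) and that $\J \to id$ combined with transitivity gives $\J \to \I$. Your added verification of transitivity and the remark on matching input spaces are sound but not needed beyond what the paper already asserts.
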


Furthermore, we can characterize the equivalence class of $id$.

\begin{proposition}\label{prop:id-equivalence-class}
An instrument $\I \in \ins(\Omega, \hi, \hik)$ is equivalent with the identity channel on $\hi$ if and only if for all $x \in \Omega$ and $\varrho \in \sh$ we have that
\begin{equation}\label{eq:isometries}
\I_x(\varrho) = \sum_{i=1}^{n_x} p_{xi} V_{xi} \varrho V_{xi}^*
\end{equation}
 for some probability distribution $(p_{xi})_{x,i}$ over $x\in \Omega$, $i\in \{1,\ldots,n_x\}$ and some isometries $V_{xi}: \hi \to \hik$ such that $V_{xj}^* V_{xi}=0$ for all $i\neq j$ for all $x\in \Omega$.
\end{proposition}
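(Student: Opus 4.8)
The plan is to prove both implications by reducing the statement to facts about channels. The key observation is that the instrument $\I$ is equivalent to $id$ if and only if $\I \to id$, since $id \to \I$ always holds by Corollary~\ref{cor:identity}. So the real content is to characterize when $\I \to id$, i.e.\ when there is a single instrument $\{\R^{(x)}\}_{x\in\Omega}$ — here each $\R^{(x)}$ is a channel $\hik \to \hi$ since $id$ has one outcome — such that $\sum_{x\in\Omega} \R^{(x)}(\I_x(\varrho)) = \varrho$ for all $\varrho$. Writing $\Lambda_x = \sum_x \I_x$ for the total channel of $\I$, the map $\varrho \mapsto \sum_x \R^{(x)}(\I_x(\varrho))$ must equal the identity, which forces a reversibility-type condition on $\I$.

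For the ``if'' direction, suppose $\I_x(\varrho) = \sum_{i=1}^{n_x} p_{xi} V_{xi}\varrho V_{xi}^*$ with the stated orthogonality $V_{xj}^* V_{xi} = \delta_{ij}\,\text{(something)}$ — more precisely $V_{xi}$ isometries and $V_{xj}^*V_{xi}=0$ for $i\neq j$ within each $x$. I would first check this is a genuine instrument: the total channel is $\sum_{x,i} p_{xi} V_{xi}(\cdot)V_{xi}^*$ with $\sum_{x,i} p_{xi} V_{xi}^*V_{xi} = \sum_{x,i} p_{xi} I = I$, so it is trace-preserving. Then I would exhibit the recovery channel. Define $W_x = \sum_{i=1}^{n_x} V_{xi} P_{xi}$ where $P_{xi} = V_{xi}V_{xi}^*$ is the projection onto $\ran V_{xi}$ inside $\hik$ — these are mutually orthogonal projections for fixed $x$ by the hypothesis $V_{xj}^*V_{xi}=0$. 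The idea is that the recovery channel $\R^{(x)}$ should implement $V_{xi}^*$ on the block $\ran V_{xi}$; concretely set $\R^{(x)}(\eta) = \sum_i V_{xi}^* P_{xi}\, \eta\, P_{xi} V_{xi} + (\text{correction to make it trace-preserving})$. One checks $\R^{(x)}(V_{xi}\varrho V_{xi}^*) = \varrho$ since $P_{xi}V_{xi} = V_{xi}$ and $V_{xi}^*V_{xi} = I$, while the cross terms $V_{xj}^* P_{xj} V_{xi}\varrho V_{xi}^* P_{xj} V_{xj}$ vanish for $i\neq j$ because $P_{xj}V_{xi} = V_{xi}V_{xi}^*V_{xj}\cdots$ — here I would use $P_{xj}P_{xi}=0$. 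Hence $\sum_x \R^{(x)}(\I_x(\varrho)) = \sum_x \sum_i p_{xi}\varrho = \varrho$, so $\I \to id$, giving $\I \leftrightarrow id$.

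For the ``only if'' direction, suppose $\I \leftrightarrow id$, so there are channels $\R^{(x)}\colon \hik\to\hi$ with $\sum_x \R^{(x)}\circ \I_x = id$. Fix a Kraus decomposition $\I_x(\varrho) = \sum_a K_{xa}\varrho K_{xa}^*$ and $\R^{(x)}(\eta) = \sum_b L_{xb}\eta L_{xb}^*$; then $\sum_{x,a,b} (L_{xb}K_{xa})\varrho(L_{xb}K_{xa})^* = \varrho$ for all $\varrho$, which means the operators $\{L_{xb}K_{xa}\}$ form a Kraus decomposition of the identity channel. A standard fact (e.g.\ from the structure of the identity channel / its one-dimensional Kraus rank) is that every such operator is a scalar multiple of $I$: $L_{xb}K_{xa} = \lambda_{xab} I$ with $\sum_{x,a,b}|\lambda_{xab}|^2 = 1$. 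From $L_{xb}K_{xa} = \lambda_{xab}I$ I would deduce that $K_{xa}$ is (up to scalar) a co-isometry-like object: $K_{xa}^* L_{xb}^* L_{xb} K_{xa} = |\lambda_{xab}|^2 I$, so on the support of $K_{xa}^*K_{xa}$ the operator $K_{xa}$ is proportional to an isometry $\hi\to\hik$. Then group the Kraus operators: within each outcome $x$, after discarding null terms and rescaling, $K_{xa} = \sqrt{p_{xa}}\,V_{xa}$ with $V_{xa}$ an isometry; the orthogonality $V_{xa'}^*V_{xa}=0$ for $a\neq a'$ comes from requiring the cross terms in $\sum_{x,a,b}(L_{xb}K_{xa})\varrho(\cdot)^*$ to vanish, combined with $L_{xb}K_{xa}=\lambda_{xab}I$ forcing $L_{xb}$ to be (proportional to) $V_{xa}^*$ on $\ran V_{xa}$. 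Finally the normalization $\sum_{x,a}p_{xa} = 1$ is the trace-preservation of the total channel of $\I$.

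\textbf{Main obstacle.} The routine parts are the ``if'' direction verification and the trace/normalization bookkeeping. The genuinely delicate step is extracting, in the ``only if'' direction, the \emph{orthogonality} $V_{xj}^*V_{xi}=0$ for $i\neq j$ within a fixed outcome $x$ and the claim that the Kraus operators of $\I_x$ can be chosen in the isometric form after the factorization $L_{xb}K_{xa}=\lambda_{xab}I$ is established — one must be careful that the Kraus decompositions of $\I_x$ and $\R^{(x)}$ interact correctly (they are only unique up to unitary mixing), so the cleanest route is probably to first fix a minimal Kraus decomposition of $\I_x$, derive from $L_{xb}K_{xa}=\lambda_{xab}I$ that each $K_{xa}$ has the form $\sqrt{p_{xa}}V_{xa}$, and then argue that minimality forces the ranges (or the initial spaces) to be arranged so that the orthogonality holds; alternatively, observe that $\sum_a K_{xa}^*K_{xa} = \A^\I(x)$ must be recoverable, and that $\I\leftrightarrow id$ forces $\A^\I(x)$ and the block structure to be compatible with an orthogonal direct-sum decomposition. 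I expect the argument that $L_{xb}K_{xa}=\lambda_{xab}I$ (i.e.\ Kraus operators of $id$ are scalars) to be quotable, but the passage from there to the stated normal form to require a short lemma on how two operations compose to the identity.
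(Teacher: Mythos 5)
Your overall strategy is the same as the paper's: reduce to showing $\I\to id$ (since $id\to\I$ is automatic), prove the ``if'' direction by exhibiting block-wise recovery channels with Kraus operators $V_{xi}^*$ plus a trace-preserving completion, and prove the ``only if'' direction from the fact that every Kraus operator of the identity channel is a scalar multiple of $I$, giving $L_{xb}K_{xa}=\lambda_{xab}I$. The ``if'' direction is essentially complete (your $V_{xi}^*P_{xi}$ equals $V_{xi}^*$, which is exactly the paper's choice, and the unspecified correction is the routine term $\tr{\Pi_x\,\cdot\,}\kb{\psi}{\psi}$ with $\Pi_x=I_{\hik}-\sum_i V_{xi}V_{xi}^*$).

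The gap is in the ``only if'' direction, at precisely the step you flag as delicate, and neither of the two routes you sketch closes it. From $L_{xb}K_{xa}=\lambda_{xab}I$ one obtains, by multiplying by the adjoint and summing over $b$,
\begin{equation*}
K_{xa'}^*K_{xa}=\Bigl(\sum_b \lambda_{xa'b}^*\lambda_{xab}\Bigr)I=:\beta^x_{a'a}\,I ,
\end{equation*}
where $\beta^x$ is a positive semidefinite Gram matrix that is \emph{not} diagonal in general. So the cross terms do not vanish for the Kraus decomposition you started with, and minimality does not help: minimality gives linear independence of the $K_{xa}$, not orthogonality of their ranges (e.g.\ two linearly independent operators each proportional to an isometry can easily satisfy $K_{x1}^*K_{x2}=cI$ with $c\neq 0$). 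The missing idea — which is the heart of the paper's proof, adapted from Nayak--Sen — is to diagonalize $\beta^x=\sum_k\gamma^x_k\,v^x_{k\cdot}(v^x_{k\cdot})^*$ and pass to the \emph{rotated} Kraus operators $C_{kx}=\sum_a v^x_{ka}K_{xa}$, which still represent $\I_x$ (unitary freedom of Kraus decompositions) and now satisfy $C_{k'x}^*C_{kx}=\delta_{k'k}\gamma^x_k I$. Only after this rotation does the singular value decomposition give $C_{kx}=\sqrt{\gamma^x_k}\,V_{xk}$ with isometries obeying $V_{xj}^*V_{xi}=0$ for $i\neq j$, and trace preservation of $\sum_x\I_x$ yields $\sum_{x,k}\gamma^x_k=1$. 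Without this change of Kraus decomposition the claimed normal form does not follow.
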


\begin{proof}
First, let $\I$ be of the form of Eq. \eqref{eq:isometries} for some probability distribution $p_{xi}$ and isometries $V_{xi}$. Since every instrument can be post-processed from the identity channel, we have that $id \to \I$. We define channels (1-outcome instruments) $\mathcal{R}^{(x)} \in \ins(\{0\}, \hik, \hi)$ by $\mathcal{R}^{(x)}(\varrho) = \sum_{i=1}^{n_x}  V_{xi}^* \varrho V_{xi} + \tr{\Pi_x\varrho}\ket{\psi}\bra{\psi}$ for all $x \in \Omega$ and $\varrho \in \shik$, where $\ket{\psi}$ is an arbitrary unit vector in $\hi$ and $\Pi_x=I_{\hik}-\sum_{i=1}^{n_x}  V_{xi} V_{xi}^* $. We remind that $\Pi_{xi}=V_{xi} V_{xi}^*$ are orthogonal projectors since $V_{xi}$ are isometries and $V_{xj}^* V_{xi}=0$ for $i\neq j$. If we denote the orthonormal vectors spanning the subspace on which $\Pi_x$ projects by $\ket{e_{xk}}$, for $k=1,\ldots,\tr{\Pi_x}$, and furthermore if we set $K_{xk}=\ket{\psi}\bra{e_{xk}}$, then $\tr{\Pi_x\varrho}\ket{\psi}\bra{\psi}=\sum_k K_{xk} \varrho K_{xk}^*$ and also $\{V_{xi}^*\}_{i=1}^{n_x} \cup \{K_{xk}\}_{k=1}^{\tr{\Pi_x}}$ are Kraus operators of the channel $\mathcal{R}^{(x)}$. Indeed, the following calculation shows that $\mathcal{R}^{(x)}$ is trace-preserving:
\begin{align*}
\sum_i V_{xi} V_{xi}^*  + \sum_k K_{xk}^* K_{xk}=\sum_i \Pi_{xi}+\Pi_x=I_\hik.
\end{align*}
Moreover,  $V_{xi}^* \ket{e_{xk}}=0$ for all $i \in \{1, \ldots, n_x\}$, $k \in \{1, \ldots, \tr{\Pi_x}\}$ and $x \in \Omega$, which follows from the definition of $\Pi_x$.
We see that
\begin{align*}
\sum_{x \in \Omega} \mathcal{R}^{(x)}\left( \I_x(\varrho) \right) &= \sum_{x \in \Omega} \sum_{j=1}^{n_x}  V_{xj}^*\left( \sum_{i=1}^{n_x} p_{xi} V_{xi} \varrho V_{xi}^*\right) V_{xj} \\
&= \sum_{x,i} p_{xi}\varrho = id(\varrho).
\end{align*}
Hence, $\I \to id$ so that $\I \leftrightarrow id$.

The second part of the proof is an adaptation of Thm. 2.1 in \cite{NaSe06}
for quantum operations forming a quantum instrument. Let $\I\in \ins(\Omega, \hi, \hik)$ be post-processing equivalent with the identity channel so that there exist channels $\J^{(x)} \in \{\{0 \}, \hik, \hi \}$ such that $\sum_{x \in \Omega} \J^{(x)} \circ \I_x = id$ on $\hi$. Let $\I_x$ and $\J^{(x)}$ have minimal Kraus operators $\{A_{ix}\}_i$ and $\{B^{(x)}_j\}_j$ respectively so that
\begin{equation*}
\varrho = id(\varrho) = \sum_{x,i,j} B^{(x)}_j A_{ix} \varrho A^*_{ix} \left( B^{(x)}_j \right)^*
\end{equation*}
for all $\varrho \in \sh$.

By the unitary equivalence of the Kraus operators it follows that there exists a set of complex numbers $\{u_{ijx}\}_{ijx} \subset \complex$ such that
\begin{align}
\label{eq:def-uijx}
B^{(x)}_j A_{ix} = u_{ijx} I_\hi \quad\quad \forall i,j,x
\end{align}
and $\sum_{i,j,x} |u_{ijx}|^2 =1$. Thus, by multiplying the Eq. (\ref{eq:def-uijx}) by its adjoint on the left, summing over $j$ and noting that $\sum_j \left( B^{(x)}_j \right)^* B^{(x)}_j = I$, we see that $A^*_{i'x} A_{ix} = \beta^x_{i'i} I_{\hi}$ for all $i',i,x$, where we denoted $\beta^x_{i'i} := \sum_{j} u^*_{i'jx} u_{ijx}$. For each $x\in \Omega$ we see that $\beta^x_{i'i}$ is a positive semidefinite matrix which can be diagonalized. Let us denote it's eigenvalues and eigenvectors by $\gamma^x_k$ and $v^x_{ki}$, respectively. Thus, we have $\beta^x_{i'i} =\sum_k \gamma^x_k v^x_{ki'} (v^x_{ki})^*$.
Let us define new set of Kraus operators for each quantum operation $\I_x$ via the relations
\begin{align*}
C_{kx} = \sum_i v^x_{ki} A_{ix}
\end{align*}
Due to unitarity of matrix $\{v^x_{ki}\}_{ik}$ the Kraus operators $\{C_{kx}\}_k$ also represent the quantum operation $\I_x$.
The important property of operators $C_{kx}$ is that their range spaces are orthogonal as is proved via the following calculation:
\begin{align*}
C_{k'x}^* C_{kx} &= \left(\sum_{i'} (v^x_{k'i'})^* A^*_{i'x}\right)\left(\sum_i v^x_{ki} A_{ix}\right)\\
&=\sum_{i',i} (v^x_{k'i'})^*  v^x_{ki}  \left(A^*_{i'x} A_{ix}\right)\\
&=\sum_{i',i} (v^x_{k'i'})^*  v^x_{ki}  \beta^x_{i'i}  I_{\hi}\\
&= \delta_{k'k} \gamma^x_k   I_{\hi},
\end{align*}
where $\delta$ is the Kronecker function. The above equation implies that the singular value decomposition of $C_{kx}$ has the form $C_{kx}=\sqrt{\gamma^x_k} \sum_l \ket{y^{x}_{kl}}\bra{u^x_{kl}}$, where $\ket{u^x_{kl}}$ is an orthonormal basis of $\hi$ and $\ket{y^{x}_{kl}}$ are orthonormal vectors in $\hik$. This means that $C_{kx}$ equals $\sqrt{\gamma^x_k}$ times unitary embedding $V_{xk}\equiv \sum_l \ket{y^{x}_{kl}}\bra{u^x_{kl}}$ of $\hi$ into $\hik$.
Since operations $\I_x$ form an instrument we have that
\begin{align*}
I_{\hi}=\sum_x \sum_k C_{kx}^* C_{kx} = (\sum_x \sum_k \gamma^x_k ) I_{\hi}.
\end{align*}
This allow us to define probability distribution $p_{xi}=\gamma^x_i$ for all $x \in \Omega$ and $i=1,\ldots, n_x$ so that altogether we have
\begin{align*}
\I_x(\varrho) &= \sum^{n_x}_{k=1} C_{kx} \varrho C_{kx}^* =  \sum^{n_x}_{k=1} \gamma^x_k   V_{xk} \varrho V_{xk}^* =\sum_{k=1}^{n_x} p_{xk} V_{xk} \varrho V_{xk}^*,
\end{align*}
which concludes the proof.
\end{proof}

Thus, the equivalence class of the identity channel is the unique greatest element. What about the least element? We can show the following.

\begin{proposition}\label{prop:trash-and-prepare}
Any trash-and-prepare instrument can be post-processed from any instrument. The equivalence class of any trash-and-prepare instrument only consists of trash-and-prepare instruments.
\end{proposition}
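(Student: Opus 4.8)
The plan is to treat the two assertions separately, each by an explicit construction, and for the second one to prove the slightly stronger statement that \emph{every} post-processing of a trash-and-prepare instrument is again of trash-and-prepare form; the assertion about equivalence classes then follows immediately.

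For the first assertion, fix an arbitrary instrument $\I \in \ins(\Omega, \hi, \hik)$ and a trash-and-prepare instrument $\mathcal{T} \in \ins(\Lambda, \hi, \hv)$ with $\mathcal{T}_y(\varrho) = \tr{\varrho}\, p_y \xi_y$ for all $y \in \Lambda$. I would post-process $\I$ into $\mathcal{T}$ by picking, for \emph{every} outcome $x \in \Omega$, the same trash-and-prepare instrument $\R^{(x)} \in \ins(\Lambda, \hik, \hv)$ defined by $\R^{(x)}_y(\sigma) = \tr{\sigma}\, p_y \xi_y$. Each $\R^{(x)}$ is a bona fide instrument: every $\R^{(x)}_y$ is completely positive and trace-nonincreasing, and $\sum_{y} \R^{(x)}_y$ is the trash-and-prepare channel with output $\sum_y p_y \xi_y$. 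Using only that $\sum_x \I_x$ is a channel, hence trace-preserving, one obtains
\[
\sum_{x \in \Omega} \R^{(x)}_y\bigl( \I_x(\varrho) \bigr) = \sum_{x \in \Omega} \tr{\I_x(\varrho)}\, p_y \xi_y = \tr{\varrho}\, p_y \xi_y = \mathcal{T}_y(\varrho)
\]
for all $y \in \Lambda$ and all $\varrho$, so $\I \to \mathcal{T}$. Taking $\I$ itself to be a (possibly different) trash-and-prepare instrument shows in passing that all trash-and-prepare instruments lie in a single equivalence class.

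For the second assertion, let $\mathcal{T} \in \ins(\Omega, \hi, \hik)$ be trash-and-prepare, $\mathcal{T}_x(\varrho) = \tr{\varrho}\, p_x \xi_x$, and suppose $\mathcal{T} \to \J$ for some $\J \in \ins(\Lambda, \hi, \hv)$, witnessed by instruments $\{\R^{(x)}\}_{x \in \Omega} \subset \ins(\Lambda, \hik, \hv)$. Substituting the explicit form of $\mathcal{T}$ into Eq. \eqref{eq:simulation} and pulling out the scalar $\tr{\varrho}$ gives
\[
\J_y(\varrho) = \tr{\varrho}\, \omega_y, \qquad \omega_y := \sum_{x \in \Omega} p_x\, \R^{(x)}_y(\xi_x) \geq 0 .
\]
Here $\sum_y \omega_y = \sum_x p_x \bigl( \sum_y \R^{(x)}_y \bigr)(\xi_x)$ is a convex combination of the states $\bigl(\sum_y \R^{(x)}_y\bigr)(\xi_x)$, hence itself a state; in particular $\sum_y \tr{\omega_y} = 1$. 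Setting $q_y = \tr{\omega_y}$ and $\eta_y = \omega_y / q_y \in \shv$ whenever $q_y \neq 0$ (and $\eta_y$ any fixed state otherwise), one gets a probability distribution $(q_y)_y$ and states $\{\eta_y\}_y$ with $\J_y(\varrho) = \tr{\varrho}\, q_y \eta_y$, i.e.\ $\J$ is a trash-and-prepare instrument. Since post-processing equivalence $\J \leftrightarrow \mathcal{T}$ in particular entails $\mathcal{T} \to \J$, this establishes that the equivalence class of $\mathcal{T}$ consists only of trash-and-prepare instruments.

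I do not expect a genuine obstacle here; the only points requiring care are the routine verification that the constructed maps satisfy the instrument axioms, the bookkeeping showing $\sum_y \omega_y$ has unit trace — which is exactly where trace-preservation of the channels $\sum_y \R^{(x)}_y$ is used — and the harmless separate treatment of the null components $\omega_y = 0$. It is worth noting that the equivalence-class statement uses only the direction $\mathcal{T} \to \J$, so combining the two parts also shows that the equivalence class of trash-and-prepare instruments is precisely the least element of the post-processing partial order.
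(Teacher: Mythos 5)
Your proposal is correct and follows essentially the same route as the paper: the same choice of outcome-independent trash-and-prepare post-processing instruments for the first claim, and the same computation showing that any post-processing of a trash-and-prepare instrument collapses to $\tr{\varrho}\,\omega_y$ with $\omega_y = \sum_x p_x \R^{(x)}_y(\xi_x)$ for the second. The extra bookkeeping you supply (unit trace of $\sum_y \omega_y$, handling of null components) matches the paper's normalization step.
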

\begin{proof}
Let $\I \in \ins(\Omega, \hi, \hik)$ be any instrument and let $\mathcal{T} \in \ins(\Lambda, \hi, \hv)$ be a trash-and-prepare instrument defined as $\mathcal{T}_y(\varrho) = \tr{\varrho} p_y \xi_y$ for all $y \in \Lambda$ for some probability distribution $(p_y)_y$ over $\Lambda$ and some set of states $\{\xi_y\}_y$. To show $\I \to \mathcal{T}$, we see that we can use nearly the same trash-and-prepare instrument as a post-processing, i.e., we set $\R^{(x)}\in \ins(\Lambda, \hik, \hv)$ such that $\R^{(x)}_y(\varrho) = \tr{\varrho} p_y \xi_y$ for all $x \in \Omega$. It follows that
\begin{align*}
\sum_{x \in \Omega} \R^{(x)}_y(\I_x(\varrho)) &= \sum_{x \in \Omega} \tr{\I_x(\varrho)} p_y \xi_y \\
&= \tr{\left(\sum_{x \in \Omega} \I_x \right)(\varrho)} p_y \xi_y \\
&= \tr{\varrho} p_y \xi_y
\end{align*}
for all $y \in \Lambda$ and $\varrho \in \lh$. This proves the first statement.

For the latter part we note that the above applies also for other trash-and-prepare instrument so that any trash-and-prepare instrument can be post-processed from any other trash-and-prepare instrument. To see that this is the whole equivalence class, let $\Q^{(y)} \in \ins(\Gamma, \hv, \hik)$ be any set of post-processing instruments for the formerly defined trash-and-prepare instrument $\mathcal{T}$. Then
\begin{equation*}
\sum_{y \in \Lambda} \Q^{(y)}_z(\mathcal{T}_y(\varrho)) = \tr{\varrho} \sum_{y \in \Lambda} p_y \Q^{(y)}_z(\xi_y) = \tr{\varrho} q_z \sigma_z
\end{equation*}
for all $z \in \Gamma$, where we have defined $q_z = \tr{ \sum_{y \in \Lambda} p_y \Q^{(y)}_z(\xi_y)}$ and $\sigma_z =  \sum_{y \in \Lambda} p_y/q_z \Q^{(y)}_z(\xi_y) \in \sh$ when $q_z \neq 0$ and $\sigma_z = \sigma$ for some fixed $\sigma \in \sh$ when $q_z=0$. Hence, post-processing a trash-and-prepare instrument just leads to another trash-and-prepare instrument.
\end{proof}

We can reformulate the previous result as follows: $\J$ is an instrument such that $\I \to \J$ for all other instruments $\I$ if and only if $\J$ is a trash-and-prepare instrument. Indeed, as in the proof of the previous Proposition, if $\J$ is trash-and-prepare, it can be post-processed from any other instrument $\I$. Conversely, if $\J$ can be post-processed from any instrument then it can be post-processed from some trash-and-prepare instrument from which it follows by the above result that $\J$ also must be trash-and-prepare.

\subsection{Indecomposable instruments}

In the case of POVMs the indecomposable (rank-1) POVMs formed the set of maximal elements of the equivalence classes in the post-processing order. Although we already characterized the single maximal element for instruments, we will see that considering indecomposability in the case of instruments gives us some resemblance to the POVM case.

\begin{definition}
A (nonzero) quantum operation $\mathcal{M}$ is \emph{indecomposable} if $\mathcal{M}= \mathcal{N} + \mathcal{N}'$ for some other quantum operations $\mathcal{N}, \mathcal{N}'$ only when $\mathcal{N} = \mu \mathcal{M}$ and $\mathcal{N}' = \mu' \mathcal{M}$ for some $\mu, \mu' >0$. We call a quantum instrument indecomposable if all of its nonzero operations are indecomposable.
\end{definition}

We can show the following characterization of indecomposable instruments:

\begin{proposition}
A quantum operation is indecomposable if and only if it has (Kraus) rank equal to one.
\end{proposition}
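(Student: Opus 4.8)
The plan is to prove both implications directly using the Kraus representation. For the easy direction, suppose $\M$ has Kraus rank one, so $\M(\varrho) = K \varrho K^*$ for a single Kraus operator $K$. If $\M = \N + \N'$ with $\N, \N'$ quantum operations, then the operator-sum form of $\N$ and $\N'$ yields $K\varrho K^* = \sum_i L_i \varrho L_i^* + \sum_j L_j' \varrho L_j'^*$ for all $\varrho$. The left-hand side is a rank-one quantum operation, and by the unitary equivalence of minimal Kraus decompositions every Kraus operator $L_i$ (resp.\ $L_j'$) of the right-hand side must be proportional to $K$, say $L_i = \alpha_i K$ and $L_j' = \alpha_j' K$. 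Hence $\N = \left(\sum_i |\alpha_i|^2\right)\M$ and $\N' = \left(\sum_j |\alpha_j'|^2\right)\M$, and since $\N,\N'$ are nonzero (being quantum operations summing to $\M$, or at least we may reduce to that case) the coefficients are strictly positive. This gives indecomposability.

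For the converse, suppose $\M$ is indecomposable and, for contradiction, has Kraus rank $n \geq 2$. Take a minimal Kraus decomposition $\M(\varrho) = \sum_{i=1}^n K_i \varrho K_i^*$. Split the index set into two nonempty parts and define $\N(\varrho) = \sum_{i=1}^m K_i \varrho K_i^*$ and $\N'(\varrho) = \sum_{i=m+1}^n K_i \varrho K_i^*$ with $1 \le m < n$; both are legitimate quantum operations (trace-nonincreasing since $\sum_i K_i^* K_i \le I$ implies the partial sums are $\le I$ as well, as each $K_i^* K_i \ge 0$), and $\M = \N + \N'$. By indecomposability, $\N = \mu \M$ for some $\mu > 0$, i.e.\ $\sum_{i=1}^m K_i \varrho K_i^* = \mu \sum_{i=1}^n K_i \varrho K_i^*$ for all $\varrho$. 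Now I would invoke minimality: the left-hand side, as a quantum operation, has a minimal Kraus decomposition using at most $m$ operators, while the right-hand side is $\mu\M$ which — because $\{K_i\}_{i=1}^n$ is a \emph{minimal} decomposition of $\M$ — requires exactly $n$ Kraus operators. Since $m < n$, this is a contradiction, so $n = 1$.

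The main obstacle is making the last minimality argument airtight. The cleanest way is to recall the standard fact that the Kraus rank of a quantum operation $\M$ equals the rank of its Choi operator $C_\M = (\M \otimes \id)(\kb{\Omega}{\Omega})$ (with $\ket{\Omega}$ an unnormalized maximally entangled vector), and that $C_\M = \sum_i \kb{k_i}{k_i}$ where $\ket{k_i}$ is the Choi vector of $K_i$; minimality of $\{K_i\}$ is equivalent to linear independence of the $\ket{k_i}$. Then $C_\N = \sum_{i=1}^m \kb{k_i}{k_i}$ has rank exactly $m$ (the chosen $\ket{k_i}$ being linearly independent), whereas $\mu C_\M$ has rank $n$; equality forces $m = n$, contradiction. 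Alternatively one can argue via the unitary-freedom theorem directly on the Kraus lists, but the Choi-rank formulation sidesteps any subtlety about comparing decompositions of different lengths. Either way, the crux is simply that a minimal Kraus decomposition of a genuine sum of two "independent" pieces cannot collapse to a proportional copy of one piece.
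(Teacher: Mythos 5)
Your proof is correct and follows essentially the same route as the paper: the rank-one-implies-indecomposable direction via the unitary freedom of Kraus decompositions is identical, and the converse likewise splits a minimal Kraus decomposition and applies indecomposability. The only difference is cosmetic: where you close the converse with a Choi-rank contradiction, the paper splits $\mathcal{M}$ into the individual single-Kraus-operator pieces $\mathcal{N}_i(\varrho)=K_i\varrho K_i^*$ and reads off directly that $\mathcal{M}=\nu_i^{-1}\mathcal{N}_i$ has a single Kraus operator, which avoids the minimality comparison you were worried about making airtight.
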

\begin{proof}
First let $\mathcal{N}: \lh \to \lk$ be an indecomposable operation with a minimal Kraus decomposition $\mathcal{N}(\varrho) = \sum_{i=1}^{r} K_i \varrho K^*_i$ for all $\varrho \in \lh$ with Kraus rank $r$. We can define operations $\mathcal{N}_i$ by setting $\mathcal{N}_i(\varrho) = K_i \varrho K_i^*$ for all $\varrho \in \lh$ and $i=1, \ldots, r$, and we see that $\mathcal{N}= \sum_{i=1}^r \mathcal{N}_i$. Because $\mathcal{N}$ is indecomposable, there exists $\nu_i >0$ such that $\mathcal{N}_i = \nu_i \mathcal{N}$ for all $i =1, \ldots, r$ so that the Kraus rank of $\mathcal{N}$ must be one.

Let then $\mathcal{N}: \lh \to \lk$ be an operation with only one Kraus operator, i.e., $\mathcal{N}(\varrho) = K \varrho K^*$ for all $\varrho \in \lh$. Let then $\mathcal{Q}$ and $\mathcal{R}$ be nonzero operations such that $\mathcal{N} = \mathcal{Q} + \mathcal{R}$ with Kraus decompositions $\mathcal{Q}(\varrho)= \sum_i A_i \varrho A^*_i$ and $\mathcal{R}(\varrho) = \sum_j B_j \varrho B^*_j$ for all $\varrho \in \lh$. By the unitary equivalence of the Kraus operators there exists complex numbers $\{u_i\}_i , \{v_j \}_j \subset \complex$ such that $A_i = u_i K $ and $B_j = v_j K$  with $\sum_i |u_i|^2 + \sum_j |v_j|^2 =1$. Hence,
\begin{align*}
\mathcal{Q}(\varrho) &= \sum_i A_i \varrho A_i^* = \left( \sum_i |u_i|^2 \right) K \varrho K^* = u \mathcal{N}(\varrho) \\
\mathcal{R}(\varrho) &= \sum_i B_i \varrho B_i^* = \left( \sum_j |v_j|^2 \right) K \varrho K^* = v \mathcal{N}(\varrho),
\end{align*}
where we have denoted $u := \sum_i |u_i|^2 >0$ and  $v := \sum_j |v_j|^2 >0$. Thus, $\mathcal{N}$ is indecomposable.
\end{proof}

An important class of indecomposable instruments are \textit{the L\"uders instruments}: if $\A \in \obs(\Omega, \hi)$ is a measurement on $\hi$, then the corresponding L\"uders instrument $\I^{\A} \in \ins(\Omega, \hi)$ with induced POVM $\A$ is defined as $\I^{\A}_x(\varrho) = \sqrt{\A(x)} \varrho \sqrt{\A(x)} $ for all $x \in \Omega$ and $\varrho \in \sh$. Since $\sqrt{\A(x)}$ is the only Kraus operator of $\I^{\A}_x$, by the previous characterization of indecomposable instruments we see that L\"uders instruments are indecomposable.

As we saw, just as with POVMs, the indecomposable elements are the ones that have (Kraus) rank equal to one. Although from the previous characterization it is obvious that the indecomposable instruments in general are not maximal elements, we will see that they can be used to produce every instrument as a post-processing of them.

Namely, if $\I \in \ins(\Omega, \hi, \hik)$ has a Kraus decomposition $\I_x(\varrho) = \sum_{i=1}^{n_x} K_{ix} \varrho K^*_{ix}$ for all $x \in \Omega$ and $\varrho \in \sh$ for some $n_x \in \nat$, then it can be (classically) post-processed from the instrument $\hat{\I}$ that is constructed from the single Kraus operators of $\I$, i.e., $\hat{\I}_{(i,x)}(\varrho) = K_{ix} \varrho K^*_{ix}$ for all $i \in \{1, \ldots, n_x\}$, $x \in \Omega$ and $\varrho \in \sh$. By using the (classical) post-processing instruments $\R^{(i,x)}$ defined as $\R^{(i,x)}_{x'} = \delta_{xx'} id_{\hik}$  for all $i \in \{1, \ldots, n_x\}$ and $x,x' \in \Omega$, we see that
\begin{align*}
\sum_{x \in \Omega} \sum_{i=1}^{n_x} \R^{(i,x)}_{x'}(\hat{\I}_{(i,x)}(\varrho)) &= \sum_{x \in \Omega} \sum_{i=1}^{n_x} \delta_{xx'} K_{ix} \varrho K^*_{ix} \\
&= \sum_{i=1}^{n_{x'}} K_{ix'} \varrho K^*_{ix'} \\
&= \I_{x'}(\varrho)
\end{align*}
for all $\varrho \in \sh$ and $x' \in \Omega$ so that $\hat{\I} \to \I$.

We call $\hat{\I}$ the \textit{detailed instrument} of $\I$ and note that any instrument has many detailed instruments depending on their Kraus decomposition, but all the detailed instruments are indecomposable.

We note that it is also known that any instrument $\I$ can be post-processed from the L\"uders instrument that has the same induced POVM as $\I$ \cite{HayashiBook}. Since also L\"uders instruments are indecomposable this is another way to see that every instrument can be post-processed from indecomposable instruments.

To conclude, every instrument can be post-processed from its (indecomposable) detailed instrument. Next, we give a sufficient condition when the inverse statement also holds so that an instrument under this condition is post-processing equivalent to its detailed instrument.

\begin{proposition}\label{prop:detailed-equivalent}
An instrument $\I \in \ins(\Omega, \hi, \hik)$ with Kraus decomposition $\I_x(\varrho) = \sum_{i=1}^{n_x} K_{ix} \varrho K_{ix}^*$ is post-processing equivalent with its detailed instrument if $K_{ix}^* K_{jx} = 0$ for all $i \neq j$ and $x \in \Omega$.
\end{proposition}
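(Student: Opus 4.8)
The plan is to establish the nontrivial direction $\I \to \hat{\I}$, since $\hat{\I} \to \I$ holds for every instrument and was already verified above, and then $\I \leftrightarrow \hat{\I}$ follows. The guiding idea is that the hypothesis $K_{ix}^* K_{jx} = 0$ for $i \neq j$ forces, for each fixed $x$, the ranges $\ran K_{ix}$ to be mutually orthogonal subspaces of $\hik$. Hence the single-Kraus piece $K_{jx} \varrho K_{jx}^*$ can be recovered out of $\I_x(\varrho)$ simply by compressing with the projection onto $\ran K_{jx}$, and such a compression is precisely the kind of operation a post-processing instrument is allowed to perform.

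Concretely, for each $x \in \Omega$ let $P_{ix}$ be the orthogonal projection onto $\ran K_{ix}$ and put $\Pi_x = I_\hik - \sum_{i=1}^{n_x} P_{ix}$. From $K_{ix}^* K_{jx} = 0$ one obtains $P_{ix} K_{jx} = \delta_{ij} K_{jx}$, and hence the two facts the construction rests on: (i) $P_{ix} \I_x(\varrho) P_{jx} = \delta_{ij} K_{ix} \varrho K_{ix}^*$ for all $\varrho$, so that $\I_x(\varrho)$ is block diagonal in this decomposition with $j$-th block equal to $\hat{\I}_{(j,x)}(\varrho)$; and (ii) $\Pi_x K_{ix} = 0$ for all $i$, so that $\Pi_x \I_x(\varrho) = 0$. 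Writing $\hat{\Lambda} = \{(i,x) : x \in \Omega,\ 1 \le i \le n_x\}$ for the outcome set of $\hat{\I}$, I then define post-processing instruments $\R^{(x)} \in \ins(\hat{\Lambda}, \hik, \hik)$, one for each $x$ with $\I_x \neq 0$, by $\R^{(x)}_{(j,x')}(\sigma) = \delta_{xx'} P_{jx} \sigma P_{jx}$ for $j > 1$ and $\R^{(x)}_{(1,x')}(\sigma) = \delta_{xx'}\left( P_{1x} \sigma P_{1x} + \Pi_x \sigma \Pi_x \right)$; for the remaining $x$, namely those with $\I_x = 0$, any fixed instrument will do, as it only ever receives the zero operator.

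It then remains to check two things. First, each $\R^{(x)}$ is a legitimate instrument: every component map is completely positive (with Kraus operators taken from the $P_{ix}$ and $\Pi_x$), and $\sum_{(j,x')} \R^{(x)}_{(j,x')}$ is the pinching $\sigma \mapsto \sum_{j=1}^{n_x} P_{jx} \sigma P_{jx} + \Pi_x \sigma \Pi_x$, which is a channel because $\{P_{1x}, \dots, P_{n_x x}, \Pi_x\}$ are mutually orthogonal projections summing to $I_\hik$. Second, the post-processing identity \eqref{eq:simulation} holds: for a fixed outcome $(j,x')$ of $\hat{\I}$ only the term $x = x'$ survives in $\sum_x \R^{(x)}_{(j,x')}(\I_x(\varrho))$, and that term equals $P_{jx'} \I_{x'}(\varrho) P_{jx'}$ (plus $\Pi_{x'} \I_{x'}(\varrho) \Pi_{x'}$, which vanishes by (ii), when $j = 1$), which by (i) is exactly $K_{jx'} \varrho K_{jx'}^* = \hat{\I}_{(j,x')}(\varrho)$. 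This gives $\I \to \hat{\I}$, and together with $\hat{\I} \to \I$ we conclude $\I \leftrightarrow \hat{\I}$.

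I do not anticipate a serious obstacle. The points requiring a little care are deducing $P_{ix} K_{jx} = \delta_{ij} K_{jx}$, and from it the block-diagonal form of $\I_x(\varrho)$, cleanly from the orthogonality hypothesis, and remembering that a post-processing of $\hat{\I}$ is constrained to use $\hat{\I}$'s outcome set $\hat{\Lambda}$ — so the orthogonal complement $\Pi_x$ of $\bigoplus_i \ran K_{ix}$ cannot be assigned its own outcome and must be folded into an existing one (here $(1,x)$); this does no harm to the post-processing identity precisely because $\Pi_x \I_x(\varrho) = 0$.
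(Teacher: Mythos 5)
Your proposal is correct and follows essentially the same route as the paper: project onto the mutually orthogonal ranges of the $K_{ix}$ and use the pinching by these projections (with the $\delta_{xx'}$ relabeling) as the post-processing instruments. The only cosmetic difference is that the paper absorbs the orthocomplement into the first outcome via the single Kraus operator $\Pi_{1x}=I-\sum_{i\neq 1}\pi_{ix}$, whereas you keep $P_{1x}$ and $\Pi_x$ as two separate Kraus operators for that outcome; both choices agree on the relevant inputs since $\Pi_x\I_x(\varrho)=0$.
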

\begin{proof}
We have shown above that $\hat{\I} \to \I$ for any instrument $\I$. Next we focus on proving the relation $\I \to \hat{\I}$ under the orthogonality condition of the Kraus operators of $\I_x$ for all $x \in \Omega$.

Suppose $K_{ix}^* K_{jx}=0$ for all $i \neq j$ for all $x \in \Omega$. If $\pi_{ix}$ is the projector onto the image of $K_{ix}$ we see that together they are mutually orthogonal for all $x\in \Omega$. Define projectors $\{\Pi_{ix}\}$ as $\Pi_{1x}=I-\sum_{i\neq 1} \pi_{ix}$, $\Pi_{ix}=\pi_{ix}$ for $i\neq 1$ and all $x \in \Omega$.
Define instruments $\R^{(x)} \in \ins(\{1, \ldots, n_x\} \times \Omega, \hi, \hik)$ as $\R^{(x)}_{(i,y)}(\varrho)= \delta_{xy}\Pi_{iy} \varrho \Pi_{iy}$. Now we see that for $i\neq 1$, we have that
\begin{align*}
\sum_{x \in \Omega} \R^{(x)}_{(i,y)}(\I_x(\varrho)) = \sum_{x \in \Omega}\sum_{j=1}^{n_x} \delta_{xy} \pi_{iy} K_{jx} \varrho  K_{jx}^* \pi_{iy} = K_{iy} \varrho K_{iy},
\end{align*}
and for $i=1$ we have
\begin{align*}
\sum_{x \in \Omega} \R^{(x)}_{(1,y)}(\I_x(\varrho)) &= \sum_{x \in \Omega} \sum_{j=1}^{n_x} \left[ \delta_{xy}
 \left(I -\sum_{i\neq 1} \pi_{iy} \right) K_{jx} \varrho K_{jx}^* \right. \\
 & \left. \ \ \ \times \left(I -\sum_{i'\neq 1} \pi_{i'y} \right)\right] \\
&= \sum_{j=1}^{n_y} K_{jy} \varrho K_{jy}^* - \sum_{j=1}^{n_y} \sum_{i \neq 1}  \pi_{iy} K_{jy} \varrho  K_{jy}^*\\
& \ \ \   - \sum_{j=1}^{n_y} \sum_{i' \neq 1}  K_{jy} \varrho  K_{jy}^* \pi_{i'y} \\
& \ \ \ + \sum_{j=1}^{n_y} \sum_{i,i' \neq 1}  \pi_{iy} K_{jy} \varrho  K_{jy}^* \pi_{i'y} \\
&= \sum_j K_{jy} \varrho K_{jy}^* - \sum_{j \neq 1} K_{jy} \varrho K_{jy}^* \\
& \ \ \  - \sum_{j \neq 1} K_{jy} \varrho K_{jy}^*
+ \sum_{j \neq 1} K_{jy} \varrho K_{jy}^* \\
 &= K_{1y} \varrho K^*_{1y},
\end{align*}
so that $\sum_{x \in \Omega} \R^{(x)}_{(i,y)}(\I_x(\varrho)) = K_{iy} \varrho K^*_{iy} = \hat{\I}_{(i,y)}(\varrho)$ for all $i \in \{1, \ldots,n_y\}$ and $y \in \Omega$ for all $\varrho \in \sh$. Hence, $\I \to \hat{\I}$.
\end{proof}

For measure-and-prepare instruments we can use the previous result to show a necessary and sufficient condition for being equivalent with an indecomposable instrument.

\begin{proposition}
A measure-and-prepare instrument $\I \in \ins(\Omega, \hi, \hik)$ is post-processing equivalent with an indecomposable instrument if and only if $\A^\I \in \obs(\Omega, \hi)$ is indecomposable.
\end{proposition}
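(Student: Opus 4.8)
The plan is to prove the two implications separately, leaning on Example~\ref{ex:m-a-p} for one direction and Proposition~\ref{prop:detailed-equivalent} for the other.

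\textbf{(If.)} Suppose $\A^\I$ is indecomposable, hence rank-$1$. After discarding any outcomes $x$ with $\A^\I(x)=0$ (which changes neither the post-processing equivalence class of $\I$ nor the indecomposability of $\A^\I$) I may write $\A^\I(x)=\kb{\varphi_x}{\varphi_x}$ for suitable nonzero $\varphi_x\in\hi$, so that $\I_x(\varrho)=\bra{\varphi_x}\varrho\ket{\varphi_x}\,\sigma_x$ for the prepared states $\sigma_x$. I would fix a spectral decomposition $\sigma_x=\sum_i p_{xi}\kb{\psi_{xi}}{\psi_{xi}}$ with $\{\ket{\psi_{xi}}\}_i$ orthonormal and $p_{xi}>0$, and take $K_{ix}=\sqrt{p_{xi}}\ket{\psi_{xi}}\bra{\varphi_x}$ as Kraus operators of $\I_x$. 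A one-line computation gives $K_{ix}^*K_{jx}=\delta_{ij}\,p_{xi}\kb{\varphi_x}{\varphi_x}$, so the orthogonality hypothesis of Proposition~\ref{prop:detailed-equivalent} holds; hence $\I$ is post-processing equivalent with the detailed instrument associated with this Kraus decomposition, and that detailed instrument is indecomposable because each of its nonzero operations has a single Kraus operator.

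\textbf{(Only if.)} Suppose $\I\leftrightarrow\J$ with $\J$ indecomposable; in particular $\I\to\J$. Since $\I$ is measure-and-prepare, Example~\ref{ex:m-a-p} supplies a post-processing $\nu=(\nu_{xy})$ and states $\xi_{xy}$ with $\J_y(\varrho)=\sum_{x}\nu_{xy}\tr{\A^\I(x)\varrho}\,\xi_{xy}$ for all $\varrho$. I would fix any $x_0$ with $\A^\I(x_0)\neq 0$; since $\sum_y\nu_{x_0 y}=1$ there is $y_0$ with $\nu_{x_0 y_0}>0$, and then $\mathcal{M}(\varrho):=\nu_{x_0 y_0}\tr{\A^\I(x_0)\varrho}\,\xi_{x_0 y_0}$ is a nonzero operation with $\J_{y_0}=\mathcal{M}+\mathcal{M}'$, where $\mathcal{M}'$ collects the remaining (legitimate) operation-valued summands. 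As $\J_{y_0}\neq 0$ it is indecomposable, so $\mathcal{M}=\mu\J_{y_0}$ for some $\mu>0$ and hence $\mathcal{M}$ has Kraus rank one. Expanding $\A^\I(x_0)=\sum_a\lambda_a\kb{\alpha_a}{\alpha_a}$ and $\xi_{x_0 y_0}=\sum_b q_b\kb{\beta_b}{\beta_b}$ spectrally exhibits $\{\sqrt{\nu_{x_0 y_0}\lambda_a q_b}\,\ket{\beta_b}\bra{\alpha_a}\}_{a,b}$ as a Kraus decomposition of $\mathcal{M}$; Kraus rank one forces all of these operators to be mutually proportional, so fixing any $b$ with $q_b>0$ the orthonormal eigenvectors $\ket{\alpha_a}$ of $\A^\I(x_0)$ must be parallel, leaving only one. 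Thus $\A^\I(x_0)$ has rank one, and since $x_0$ was arbitrary $\A^\I$ is rank-$1$, i.e.\ indecomposable.

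The step I expect to be the crux is this last one: transferring indecomposability of the operation $\J_{y_0}$ into a rank-one statement about the effect $\A^\I(x_0)$. Its content is the fact that a measure-and-prepare operation $\varrho\mapsto\tr{E\varrho}\,\xi$ has Kraus rank $\mathrm{rank}(E)\cdot\mathrm{rank}(\xi)$, so Kraus rank one forces both factors to have rank one; isolating the single summand $\mathcal{M}$ from $\J_{y_0}$ via indecomposability and then extracting this from an explicit Kraus decomposition is the delicate part. A minor point is the reduction to the non-vanishing case in the first direction, needed so that the rank-$1$ effects can be written as $\kb{\varphi_x}{\varphi_x}$.
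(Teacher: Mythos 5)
Your proof is correct, and while the ``if'' direction coincides with the paper's, the ``only if'' direction takes a genuinely different route. For the ``if'' part both arguments build the Kraus operators $\sqrt{p_{xi}}\,\ket{\psi_{xi}}\bra{\varphi_x}$ from the spectral decomposition of the prepared states together with the rank-one effects, note the orthogonality $K_{ix}^*K_{jx}=0$ for $i\neq j$, and invoke Proposition \ref{prop:detailed-equivalent}; your preliminary removal of vanishing outcomes is a harmless relabelling. For the ``only if'' part the paper works at the level of Kraus operators of the composition $\J_y=\sum_x\R^{(x)}_y\circ\I_x$: unitary equivalence gives $R^{(x)}_{ky}K_{ijx}=u_{ijkxy}L_y$, and summing the adjoint products yields $\bigl(\sum_{i,k}|u_{ijkxy}|^2\bigr)L_y^*L_y=\tr{\mathsf{R}^{(x)}(y)\xi_x}\,q_{jx}\kb{\psi_{jx}}{\psi_{jx}}$, which is contradictory for two distinct indices $j$ once some $\A^\I(x')$ has rank at least two. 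You instead use Example \ref{ex:m-a-p} to present $\J_y$ as a sum of operations $\varrho\mapsto\nu_{xy}\tr{\A^\I(x)\varrho}\xi_{xy}$, apply the sum-decomposition definition of indecomposability to force each nonzero summand to be proportional to $\J_y$ and hence to have Kraus rank one, and then observe (via the Choi matrix $\A^\I(x)^T\otimes\xi_{xy}$, or your explicit Kraus family) that such an operation has Kraus rank $\mathrm{rank}(\A^\I(x))\cdot\mathrm{rank}(\xi_{xy})$, so both factors must be rank one. Your version is cleaner and more conceptual, avoiding the $u_{ijkxy}$ bookkeeping and exploiting the measure-and-prepare structure directly; the paper's computation is the one that transfers to settings like Proposition \ref{prop:associated-POVMs} where no such structure is available. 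Both proofs use only $\I\to\J$, not the full equivalence. The one loose end to tidy is the degenerate case where your remainder $\mathcal{M}'$ vanishes (then $\mathcal{M}=\J_{y_0}$ outright and the rank-one conclusion follows from the characterization of indecomposable operations rather than from the decomposition property), which costs nothing.
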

\begin{proof}
Let us consider a measure-and-prepare instrument $\I \in \ins(\Omega, \hi, \hik)$ that is of the form $\I_x(\varrho) =  \tr{\A(x) \varrho} \xi_x$ for all $x \in \Omega$ and $\varrho \in \sh$ for some POVM $\A \in \obs(\Omega, \hi)$ and some states $\{\xi_x\}_{x \in \Omega} \subset \shik$.  For each $x \in \Omega$, let us consider the spectral decomposition of the state $\xi_x = \sum_{i=1}^{n_x} p_{ix} \kb{\varphi_{ix}}{\varphi_{ix}}$, where $p_{ix} \geq 0$, $\sum_i p_{ix} =1$ and $\{\varphi_{ix}\}_i$ is a set of orthonormal vectors in $\hik$ for all $x \in \Omega$. Similarly for each $x \in \Omega$, we can write $\A(x)$ as $\A(x) = \sum_{j=1}^{m_x} q_{jx} \kb{\psi_{jx}}{\psi_{jx}}$ for some orthogonal set of vectors $\{\psi_{jx}\}_j$ in $\hi$ and some positive numbers $\{q_{jx}\}_j$. Let us define operators $K_{ijx} := \sqrt{p_{ix} q_{jx}} \kb{\varphi_{ix}}{\psi_{jx}}$ for all $i \in \{1, \ldots, n_x\}$ and $j \in \{1, \ldots, m_x\}$ for each $x \in \Omega$. One can confirm that $\sum_{i,j} K^*_{ijx} K_{ijx} = \A(x)$ for all $x \in \Omega$ so that $\sum_{i,j,x} K^*_{ijx} K_{ijx} = I_\hi$, and that $\sum_{i=1}^{n_x}\sum_{j=1}^{m_x} K_{ijx} \varrho K^*_{ijx} = \I_x(\varrho)$ for all $x \in \Omega$ so that $\{K_{ijx}\}_{i,j}$ is a set of Kraus operators for $\I_x$.

Let first $\A$ be indecomposable, i.e. rank-1, so that $m_x=1$ for all $x \in \Omega$ and we can omit the index $j$ in the previous consideration and thus $K_{ix} := \sqrt{p_{ix} q_{1x}} \kb{\varphi_{ix}}{\psi_{1x}}$ form the set of Kraus operators for $\I_x$ for all $x \in \Omega$. We see that $K^*_{ix}K_{i'x} = 0$ for all $i \neq i'$ for all $x \in \Omega$ and thus by Prop. \ref{prop:detailed-equivalent} the instrument $\I$ is equivalent with its detailed instrument related to that Kraus decomposition. From the proof of Prop. \ref{prop:detailed-equivalent} we see that the simulator instruments $\R^{(x)} \in \ins(\{1, \ldots, n_x\}\times\Omega, \hik)$ can be written as $\R^{(x)}_{(i,y)}(\varrho) = \delta_{xy} \Pi_{ix} \varrho \Pi_{ix}$, where the projectors $\{\Pi_{ix}\}_i$ are defined as $\Pi_{ix} = \kb{\varphi_{ix}}{\varphi_{ix}}$ for all $i \neq 1$ and $\Pi_{1x} = I_\hik - \sum_{i \neq 1} \kb{\varphi_{ix}}{\varphi_{ix}}$ for all $x \in \Omega$.

Let then $\A$ not be rank-1, i.e., there exists $x' \in \Omega$ such that $m_{x'} \geq 2$. Suppose that $\I$ is equivalent to some indecomposable instrument $\J \in \ins(\Lambda, \hi, \hv)$ with Kraus decomposition $\J_y(\varrho) = L_y \varrho L^*_y$ for all $y \in \Lambda$ and $\varrho \in \sh$. Thus, there exist instruments $\R^{(x)} \in \ins(\Lambda, \hik, \hv)$ with Kraus operators $R^{(x)}_{ky}$ such that
\begin{align*}
 L_y \varrho L^*_y &= \J_y(\varrho) = \sum_{x \in \Omega} \R^{(x)}_y(\I_x(\varrho)) \\
 &= \sum_{i,j,k,x} R^{(x)}_{ky} K_{ijx} \varrho K^*_{ijx} \left( R^{(x)}_{ky} \right)^*
\end{align*}
for all $y \in \Lambda$ and $\varrho \in \sh$. From the unitary equivalence of the Kraus operators it follows that there exist complex numbers $u_{ijkxy} \in \complex$ such that
\begin{equation}\label{eq:m-a-p-kraus-equivalence}
R^{(x)}_{ky} K_{ijx} = u_{ijkxy} L_y
\end{equation}
for all $i,j,k,x,y$ and $\sum_{i,j,k,x} |u_{i,j,k,x}|^2 = 1$ for all $y \in \Lambda$. If we denote the induced POVM of $\R^{(x)}$ by $\mathsf{R}^{(x)}$, i.e., $\mathsf{R}^{(x)}(y) = \sum_k \left( R^{(x)}_{ky} \right)^* R^{(x)}_{ky}$ for all $y \in \Lambda$ for all $x \in \Omega$, by multiplying Eq. \eqref{eq:m-a-p-kraus-equivalence} by its adjoint from the left and by summing over indices $i,k$, we see that
\begin{align*}
\left( \sum_{i,k} |u_{ijkxy}|^2 \right) L^*_y L_y &= \sum_{i,k} K^*_{ijx} \left(R^{(x)}_{ky} \right)^* R^{(x)}_{ky} K_{ijx} \\
&= \sum_i  p_{ix} \ip{\varphi_{ix}}{ \sum_k \left( R^{(x)}_{ky} \right)^* R^{(x)}_{ky} \varphi_{ix}}  \\
& \ \ \ \ \ \ \times   q_{jx} \kb{\psi_{jx}}{\psi_{jx}} \\
&= \tr{\mathsf{R}^{(x)}(y) \xi_x} q_{jx} \kb{\psi_{jx}}{\psi_{jx}}
\end{align*}
for all $j \in \{1, \ldots, m_x\}$, $x \in \Omega$ and $y\in \Lambda$.

As we mentioned, since $\A$ is not rank-1, there exists $x' \in \Omega$ such that $m_{x'} \geq 2$ so that there are indices $j',j'' \in \{1, \ldots, m_{x'}\}$ such that $q_{j'x'}, q_{j''x'} >0$. Also, since $\mathsf{R}^{(x')}$ is a POVM, there exists $y' \in \Lambda$ such that $\tr{\mathsf{R}^{(x')}(y') \xi_{x'}} \neq 0$. Thus, we have that
\begin{align*}
\left( \sum_{i,k} |u_{ij'kx'y'}|^2 \right) L^*_{y'} L_{y'} &= \tr{\mathsf{R}^{(x')}(y') \xi_{x'}} q_{j'x'} \kb{\psi_{j'x'}}{\psi_{j'x'}} \\
\left( \sum_{i,k} |u_{ij''kx'y'}|^2 \right) L^*_{y'} L_{y'} &= \tr{\mathsf{R}^{(x')}(y') \xi_{x'}} q_{j''x'} \kb{\psi_{j''x'}}{\psi_{j''x'}}
\end{align*}
which leads to a contradiction since $\kb{\psi_{j'x'}}{\psi_{j'x'}}$ is not proportional to $\kb{\psi_{j''x'}}{\psi_{j''x'}}$. Hence, if $\A$ is not rank-1, then $\I$ is not post-processing equivalent to any indecomposable instrument.
\end{proof}

This result shows that there are instruments that are not equivalent with an indecomposable instruments and this is the case for all trash-and-prepare instruments. On the other hand, it is known that if the induced POVM $\A^\I$ of any instrument $\I$ is rank-1, then $\I$ must be a measure-and-prepare instrument \cite{HeWo10}. Hence, we get the following corollary.

\begin{corollary}
Any instrument with an indecomposable induced POVM is post-processing equivalent with its detailed (indecomposable) instrument.
\end{corollary}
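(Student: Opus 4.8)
The plan is to deduce the statement from the two facts just recorded: the known result \cite{HeWo10} that an instrument with a rank-1 induced POVM is measure-and-prepare, and the preceding Proposition together with its proof. So let $\I\in\ins(\Omega,\hi,\hik)$ have an indecomposable, i.e.\ rank-1, induced POVM $\A^\I$. First I would invoke \cite{HeWo10} to conclude that $\I$ is a measure-and-prepare instrument, $\I_x(\varrho)=\tr{\A^\I(x)\varrho}\,\xi_x$ for all $\varrho\in\sh$ and $x\in\Omega$, for some states $\{\xi_x\}_{x\in\Omega}\subset\shik$. This is precisely the setting in which the preceding Proposition applies, and since $\A^\I$ is rank-1 we are in its favourable (rank-1) case.

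Next I would simply reuse the construction carried out in the rank-1 case of the proof of the preceding Proposition. Writing $\A^\I(x)=q_x\kb{\psi_x}{\psi_x}$ and taking the spectral decomposition $\xi_x=\sum_{i=1}^{n_x}p_{ix}\kb{\varphi_{ix}}{\varphi_{ix}}$ with $\{\varphi_{ix}\}_i$ orthonormal in $\hik$, the operators $K_{ix}:=\sqrt{p_{ix}q_x}\,\kb{\varphi_{ix}}{\psi_x}$ form a Kraus decomposition of $\I_x$, and orthonormality of the $\varphi_{ix}$ gives $K_{ix}^*K_{i'x}=q_x\sqrt{p_{ix}p_{i'x}}\,\delta_{ii'}\kb{\psi_x}{\psi_x}$, hence $K_{ix}^*K_{i'x}=0$ for $i\neq i'$. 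Therefore Proposition \ref{prop:detailed-equivalent} applies to this Kraus decomposition and yields $\I\leftrightarrow\hat\I$, where $\hat\I_{(i,x)}(\varrho)=K_{ix}\varrho K_{ix}^*$ is the associated detailed instrument; $\hat\I$ is indecomposable because each of its operations has a single Kraus operator. This establishes the corollary.

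The point requiring a little care — the reason this is a genuine corollary of the earlier Proposition rather than an immediate consequence of the abstract ``iff'' statement — is that the orthogonality hypothesis of Proposition \ref{prop:detailed-equivalent} is not met by an arbitrary Kraus decomposition of $\I_x$: in general one only gets $K_{ix}=\ket{w_{ix}}\bra{\psi_x}$ with the vectors $w_{ix}$ not mutually orthogonal, and one must single out the decomposition arising from the spectral decompositions of the prepared states. Guaranteeing the measure-and-prepare form via \cite{HeWo10} is thus the essential step; everything else is bookkeeping already done in the preceding proof. One may also remark that in this rank-1 situation every detailed instrument of $\I$ is post-processing equivalent to $\I$ (and hence to every other): any detailed instrument $\hat\I_1$ satisfies $\hat\I_1\to\I$, while $\I\to\hat\I_1$ follows by routing the outcome $x$ of $\I$ through a trivial-POVM measure-and-prepare post-processing that outputs the suitably normalised rank-1 states $\kb{w_{ky}}{w_{ky}}$, so ``its detailed instrument'' is unambiguous at the level of the post-processing order.
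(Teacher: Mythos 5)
Your proposal is correct and follows essentially the same route as the paper: the corollary is obtained by combining the result of \cite{HeWo10} (rank-1 induced POVM implies measure-and-prepare form) with the rank-1 case of the proof of the preceding proposition, where the spectral-decomposition Kraus operators $K_{ix}=\sqrt{p_{ix}q_x}\,\kb{\varphi_{ix}}{\psi_x}$ are shown to satisfy the orthogonality hypothesis of Proposition \ref{prop:detailed-equivalent}. Your added observation that one must choose this particular Kraus decomposition (rather than an arbitrary one) to meet that hypothesis is a correct and worthwhile point of care.
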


\section{Post-processing of the induced POVMs}\label{sec:induced-POVM}

So far we have separately considered the post-processing of POVMs and instruments. However, as quantum instruments are also a form of measurement, we can start finding connections between the two post-processings. When the resulting instruments is indecomposable, we can show the following:

\begin{proposition}\label{prop:associated-POVMs}
Let instruments $\I \in \ins(\Omega, \hi, \hik)$ and $\J \in \ins(\Lambda, \hi, \hv)$ be such that $\J$ is indecomposable. If $\I \to \J$, then $\A^\J \to \A^\I$.
\end{proposition}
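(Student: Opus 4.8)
The plan is to lift the defining relation of $\I \to \J$ to the level of Kraus operators, use that indecomposability of $\J$ pins down the Kraus rank of each $\J_y$ to one, and then read off a stochastic matrix from the scalar coefficients that appear.

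First I would fix Kraus decompositions $\I_x(\varrho) = \sum_i K_{ix}\varrho K_{ix}^*$ of the operations of $\I$ and $\R^{(x)}_y(\varrho) = \sum_k R^{(x)}_{ky}\varrho \bigl(R^{(x)}_{ky}\bigr)^*$ of the post-processing instruments granted by $\I\to\J$, so that the hypothesis reads $\J_y(\varrho) = \sum_{x,i,k} R^{(x)}_{ky}K_{ix}\,\varrho\, K_{ix}^*\bigl(R^{(x)}_{ky}\bigr)^*$ for all $\varrho$ and $y\in\Lambda$. Since $\J$ is indecomposable, each nonzero $\J_y$ has Kraus rank one, say $\J_y(\varrho) = L_y\varrho L_y^*$, and hence $\A^\J(y) = L_y^*L_y$. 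Outcomes $y$ with $\J_y = 0$ satisfy $\A^\J(y) = 0$ and contribute nothing to a post-processing, so I would simply assign them an arbitrary probability row of the stochastic matrix at the end and otherwise ignore them.

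The key step is the unitary equivalence of Kraus decompositions: since $\{R^{(x)}_{ky}K_{ix}\}_{i,k,x}$ and $\{L_y\}$ are two Kraus families of the same operation $\J_y$, there are complex numbers $u_{ikxy}$ with $R^{(x)}_{ky}K_{ix} = u_{ikxy}L_y$ for all $i,k,x$ and $\sum_{i,k,x}\mo{u_{ikxy}}^2 = 1$ for each $y$ with $\J_y\neq 0$. Multiplying this identity by its adjoint on the left and summing over $k$ gives, writing $\mathsf{R}^{(x)}$ for the induced POVM of $\R^{(x)}$,
\begin{equation*}
K_{ix}^*\,\mathsf{R}^{(x)}(y)\,K_{ix} = \Bigl(\sum_k \mo{u_{ikxy}}^2\Bigr)\,\A^\J(y).
\end{equation*}
Now I would sum over $i$ and over $y\in\Lambda$. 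On the left, summing over $y$ first and using $\sum_y \mathsf{R}^{(x)}(y) = I_\hik$, then over $i$, yields $\sum_i K_{ix}^*K_{ix} = \A^\I(x)$. On the right one gets $\sum_y \nu_{yx}\,\A^\J(y)$ with $\nu_{yx} := \sum_{i,k}\mo{u_{ikxy}}^2 \geq 0$. Hence $\A^\I(x) = \sum_{y\in\Lambda}\nu_{yx}\A^\J(y)$, and $\sum_{x\in\Omega}\nu_{yx} = \sum_{i,k,x}\mo{u_{ikxy}}^2 = 1$ shows that $\nu$ is a stochastic matrix (with the rows for $\J_y = 0$ filled in arbitrarily), so $\A^\J\to\A^\I$.

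I expect the only delicate point to be the bookkeeping around the unitary-equivalence step — verifying that $\{R^{(x)}_{ky}K_{ix}\}$ and $\{L_y\}$ really are Kraus families of the same $\J_y$ and that indecomposability is precisely what forces the latter to be a singleton — together with the harmless corner case of outcomes with vanishing operations. Once those are settled, the remaining manipulation is a direct computation resting only on the normalizations $\sum_y\mathsf{R}^{(x)}(y) = I_\hik$ and $\sum_i K_{ix}^*K_{ix} = \A^\I(x)$.
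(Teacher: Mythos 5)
Your proposal is correct and follows essentially the same route as the paper's proof: lift $\J_y=\sum_{x}\R^{(x)}_y\circ\I_x$ to Kraus operators, invoke unitary equivalence against the single Kraus operator of the indecomposable $\J_y$ to get $R^{(x)}_{ky}K_{ix}=u_{ikxy}L_y$, and sum $|u_{ikxy}|^2$ over $i,k$ to produce the stochastic matrix $\nu_{yx}$, using $\sum_{k,y}(R^{(x)}_{ky})^*R^{(x)}_{ky}=I_\hik$ and $\sum_{i,k,x}|u_{ikxy}|^2=1$. Your explicit handling of outcomes with $\J_y=0$ is a small tidiness the paper leaves implicit, but the argument is the same.
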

\begin{proof}
If $\I \to \J$, then there exists post-processings $\R^{(x)} \in \ins(\Lambda, \hik, \hv)$ for all $x \in \Omega$ such that $\J_y = \sum_{x \in \Omega} \R^{(x)}_y \circ \I_x$ for all $y \in \Lambda$. Let $\{A_{ix}\}_i$, $B_y$ and $\{R^{(x)}_{ky}\}_k$ be Kraus operators for $\I_x$, $\J_y$ and $\R^{(x)}_y$ respectively. Thus,
\begin{equation*}
B_{y} \varrho B^*_{y} = \sum_{i,k,x} R^{(x)}_{ky} A_{ix} \varrho A^*_{ix} \left( R^{(x)}_{ky}\right)^*
\end{equation*}
for all $\varrho \in \sh$.

From the unitary equivalence of the Kraus operators it follows that there exists complex numbers $\{u_{ikxy}\}_{i,k,x,y} \subset \complex$ such that $R^{(x)}_{ky} A_{ix} =  u_{ikxy} B_{y}$ for all $i,k,x,y$ and $\sum_{i,k,x} |u_{ikxy}|^2 =1$ for all $y \in \Lambda$. By multiplying the left side of the previous expression by its adjoint and summing over the indices $i,k,y$, we see that
\begin{equation*}
\sum_i A^*_{ix} A_{ix} = \sum_{i,k,y} A^*_{ix} \left( R^{(x)}_{ky}\right)^* R^{(x)}_{ky} A_{ix} = \sum_{i,k,y} |u_{ikxy}|^2 B^*_y B_y
\end{equation*}
for all $x \in \Omega$.

Clearly $\A^\I(x) = \sum_i A^*_{ix}A_{ix}$ and $\A^\J(y)= B^*_y B_y$ for all $x \in \Omega$ and $y \in \Lambda$. If we denote $\nu_{yx} = \sum_{i,k} |u_{ikxy}|^2 \geq 0$, we see that $\sum_{x \in \Omega} \nu_{yx} = 1$ for all $y \in \Lambda$ so that it defines a post-processing $(\nu_{yx})_{y\in \Lambda, x \in \Omega}$ such that $\A^\I(x) = \sum_{y \in \Lambda} \nu_{yx} \A^\J(y)$ for all $x \in \Omega$. Hence, $\A^\J \to \A^\I$.
\end{proof}

We note that the previous claim is not true when $\J$ is not indecomposable. To see this, let us consider the case when $\J$ is a trash-and-prepare instrument. As we have shown in Prop. \ref{prop:trash-and-prepare}, then $\I \to \J$ for any instrument $\I$, but since $\A^\J$ is now a trivial POVM the relation $\A^\J \to \A^\I$ does not hold in general.

When both instruments are indecomposable, we can prove even a stronger result, but for that we need a small lemma first.

\begin{lemma}
\label{lemma:KLrel}
Suppose that operators $K:\hi \rightarrow \hik,L: \hi\rightarrow \hv$ satisfy $K^* K= c L^* L$ for some $c>0$. Then there exists an operator $U: \hv \rightarrow \hik$ such that $K=\sqrt{c} U L$ with the following properties.
\begin{enumerate}
  \item If $\dim \hik \geq \dim\hv$, then $U$ is an isometry.
  \item If $\dim \hik < \dim \hv$, then $U$ is a partial isometry such that its range is the whole $\hik$ and $U^* U L=L$.
\end{enumerate}
\end{lemma}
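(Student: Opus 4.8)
The hypothesis $K^*K = cL^*L$ is a statement about operators on $\hi$; the natural move is to pass to polar/singular-value decompositions and read off how $K$ and $L$ are related on the relevant subspaces. First I would write the singular value decomposition $L = \sum_l s_l \ket{f_l}\bra{g_l}$, where $\{\ket{g_l}\}$ is an orthonormal system in $\hi$ (the cokernel of $L$), $\{\ket{f_l}\}$ an orthonormal system in $\hv$ (the range of $L$), and $s_l>0$. Then $L^*L = \sum_l s_l^2 \ket{g_l}\bra{g_l}$, so $K^*K = c\sum_l s_l^2 \ket{g_l}\bra{g_l}$; in particular $K^*K$ has the same cokernel $\mathrm{span}\{\ket{g_l}\}$ and $K$ vanishes on its orthogonal complement, and the polar decomposition of $K$ gives $K = \sum_l \sqrt{c}\,s_l \ket{h_l}\bra{g_l}$ for some orthonormal system $\{\ket{h_l}\}$ in $\hik$ (the range of $K$). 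So $K = \sqrt{c}\,W L W'$-type identities will reduce to defining $U$ so that $U\ket{f_l} = \ket{h_l}$ for each $l$.

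**Constructing $U$.** Define $U_0 := \sum_l \ket{h_l}\bra{f_l}$, a partial isometry from $\hv$ to $\hik$ with initial space $\ran L = \mathrm{span}\{\ket{f_l}\}$ and final space $\ran K = \mathrm{span}\{\ket{h_l}\}$. Immediately $\sqrt{c}\,U_0 L = \sqrt{c}\sum_l s_l \ket{h_l}\bra{g_l} = K$, and $U_0^*U_0$ is the projection onto $\ran L$, hence $U_0^*U_0 L = L$. This already handles everything except the isometry claim in case (1). Now I must extend $U_0$ to a genuine isometry or a co-surjective partial isometry depending on the dimension count. Write $r = \dim(\ran L) = \dim(\ran K) = \mathrm{rank}\,L = \mathrm{rank}\,K$ (these are equal because $K^*K$ and $L^*L$ have the same rank). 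In case (1), $\dim\hik \ge \dim\hv$; the domain orthocomplement $(\ran L)^\perp \subset \hv$ has dimension $\dim\hv - r$, and the target orthocomplement $(\ran K)^\perp \subset \hik$ has dimension $\dim\hik - r \ge \dim\hv - r$, so I can pick any isometric embedding $\ket{f'_m}\mapsto\ket{h'_m}$ of an orthonormal basis of $(\ran L)^\perp$ into $(\ran K)^\perp$ and set $U := U_0 + \sum_m \ket{h'_m}\bra{f'_m}$. Then $U$ is an isometry (it maps an orthonormal basis of $\hv$ to an orthonormal system in $\hik$) and still $\sqrt{c}\,UL = \sqrt{c}\,U_0 L = K$ since the added part kills $\ran L$. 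In case (2), $\dim\hik < \dim\hv$, so instead I enlarge the final space: $(\ran K)^\perp$ has dimension $\dim\hik - r$ and $(\ran L)^\perp$ has dimension $\dim\hv - r > \dim\hik - r$, so I can choose a surjection realized by a partial isometry from a subspace of $(\ran L)^\perp$ onto $(\ran K)^\perp$ and add it to $U_0$; the resulting $U$ has range all of $\hik$, is a partial isometry, still satisfies $\sqrt{c}\,UL = K$, and $U^*U$ restricted to $\ran L$ is the identity, giving $U^*UL = L$.

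**Main obstacle.** The genuine content is just the bookkeeping of dimensions: the only nontrivial fact used is $\mathrm{rank}\,K = \mathrm{rank}\,L$, which is immediate from $K^*K = cL^*L$ with $c>0$. The ``hard'' (really only fiddly) part is organizing the extension of the core partial isometry $U_0$ in the two dimension regimes so that the stated properties ($U$ isometry in case 1; $U$ co-surjective partial isometry with $U^*UL=L$ in case 2) come out cleanly, and checking that the extra summands added to $U_0$ do not disturb the identity $\sqrt{c}\,UL=K$ — which holds because those summands are supported on $(\ran L)^\perp = \ker(L^*)^\perp{}^\perp$, i.e.\ they annihilate the range of $L$. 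I would present the SVD computation once, extract $U_0$, verify $\sqrt{c}\,U_0L=K$ and $U_0^*U_0L=L$, and then dispatch the two cases in two short paragraphs. I expect no real difficulty; the proof is essentially a controlled polar-decomposition argument.
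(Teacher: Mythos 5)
Your proof is correct and follows essentially the same route as the paper's: a singular-value decomposition puts $L$ and $K$ into the forms $\sum_l s_l\ket{f_l}\bra{g_l}$ and $\sqrt{c}\sum_l s_l\ket{h_l}\bra{g_l}$ with shared right singular vectors, the core partial isometry $\ket{f_l}\mapsto\ket{h_l}$ gives $\sqrt{c}\,U_0L=K$ and $U_0^*U_0L=L$, and the two cases are settled by extending $U_0$ according to the dimension count between $(\ran L)^\perp$ and $(\ran K)^\perp$. The only difference is in how the aligned right singular vectors are obtained --- you read them off from the polar decomposition of $K$ using $|K|=\sqrt{c}\,|L|$, while the paper takes independent SVDs of $K$ and $L$ and explicitly rebases the right singular vectors within each degenerate eigenspace of $K^*K$ --- but this is a bookkeeping shortcut, not a different argument.
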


\begin{proof}
Let us start with a small note about partial isometries. Suppose we have an operator $X:\hi \rightarrow \hik $ defined as $X=\sum_{m=a}^b \ket{e_m}\bra{g_m}$, where $\{\ket{g_m}\}_{m=a}^{b}$ and $\{\ket{e_m}\}_{m=a}^{b}$ are two sets of orthonornal vectors from $\hi$ and $\hik$, respectively.
Thus, operator $X$ isometrically transfers subspace $V_{in}=\mathrm{span}\left(\{\ket{g_m}\}_{m=a}^{b}\right)$ into subspace $V_{out}=\mathrm{span}\left(\{\ket{e_m}\}_{m=a}^{b}\right)$. Suppose $\{\ket{h_m}\}_{m=a}^{b}$ is another set of orthonormal vectors that span $V_{in}$. Consequently, a projector onto $V_{in}$
can be written as $P=\sum_{m=a}^b \ket{g_m}\bra{g_m}=\sum_{m=a}^b \ket{h_m}\bra{h_m}$.
Clearly,
\begin{align}
\label{eq:rewritePI}
X&=XP=\sum_{m=a}^b \ket{e_m}\bra{g_m}\sum_{n=a}^b \ket{h_n}\bra{h_n} =\sum_{n=a}^b \ket{\tilde{e}_n}\bra{h_n},
\end{align}
where vectors $\ket{\tilde{e}_n}=\sum_{m=a}^b \ip{g_m}{h_n}\ket{e_m}$ are orthonormal as one can easily check.
Thus, we see that $X$ can be also seen as a (linear) isometric transformation of orthonormal vectors $\ket{h_n}$ onto orthonormal vectors $\ket{\tilde{e}_n}$.

Next, we consider singular value decompositions of operators $K$ and $L$
\begin{align} \label{eq:defkl}
K=\sum_{m=1}^k \lambda_m \ket{e_m}\bra{g_m}
\quad
L=\sum_{n=1}^l \mu_n\ket{f_n}\bra{h_n},
\end{align}
where the singular values $\lambda_m,\mu_n>0$ are arranged in the decreasing order and  $\{\ket{g_m}\}_{m=1}^{k}$, $\{\ket{e_m}\}_{m=1}^{k}$, $\{\ket{h_n}\}_{n=1}^{l}$,$\{\ket{f_n}\}_{n=1}^{l}$ are orthonormal vectors in the corresponding Hilbert spaces and we assume they were extended to form an orthonormal basis in each of the spaces.
Equality $K^* K= c L^* L$ can be now written as
\begin{align*}
\sum_{m=1}^k \lambda_m^2 \ket{g_m}\bra{g_m}=\sum_{n=1}^l c \mu_n^2 \ket{h_n}\bra{h_n}.
\end{align*}

Both left and right side have form of a spectral decomposition for the same positive-semidefinite operator $K^*K$. This has important consequences for the singular value decompositions (\ref{eq:defkl}). First of all, $k=l$ and $\lambda_m = \sqrt{c} \mu_m$. If some of the eigenvalues of $K^*K$ (or equivalently singular values of $K$ or $L$) are degenerate then for every such eigenspace defined by eigenvalue $\lambda_a =\lambda_{a+1}=\ldots = \lambda_b$ we have that
\begin{align*}
P_{\lambda_a}=\sum_{m=a}^b \ket{g_m}\bra{g_m}=\sum_{m=a}^b \ket{h_m}\bra{h_m},
\end{align*}
i.e. both $\{\ket{g_m}\}_{m=a}^{b}$ and $\{\ket{h_n}\}_{n=a}^{b}$ are orthonormal basis of this eigenspace.
Using the considerations about partial isometries (specifically Eq. (\ref{eq:rewritePI})) from the beginning of this proof we can rewrite operator $K$ as
\begin{align}\label{eq:rewriteK}
K=\sum_{m=1}^k \sqrt{c} \mu_m \ket{\tilde{e}_m}\bra{h_m}.
\end{align}
We extend orthonormal vectors $\{\ket{\tilde{e}_m}\}_{m=1}^{k}$ to form an orthonormal basis of $\hik$.
At his point we have to consider separately two cases based on the relation between dimensions of $\hik$ and $\hv$.
First, let us consider $\dim \hik \geq \dim\hv$.
We note that $k\leq \dim \hv \leq \dim \hik$ and before we defined
$\{\ket{f_n}\}_{n=1}^{\dim \hv}$ as a complete orthonormal basis of $\hv$.
We can now define operator $U:\hv\rightarrow \hik$ as
\begin{align}\label{eq:defU1}
U=\sum_{m=1}^{\dim \hv}\ket{\tilde{e}_m}\bra{f_m}.
\end{align}
Clearly, $U$ is an isometry by definition and direct calculation verifies that $\sqrt{c}\; UL$ equals $K$ as given in Eq. (\ref{eq:rewriteK}).
Next, we consider $\dim \hik < \dim \hv$.  Let us now set
\begin{align*}
U=\sum_{m=1}^{\dim \hik}\ket{\tilde{e}_m}\bra{f_m}.
\end{align*}
As we see from the above definition, $U$ is a partial isometry, whose range is the whole $\hik$. Due to $k\leq \dim \hik$ we again have that $\sqrt{c}\; UL=K$ and one can also easily verify that $U^* U L=L$, which concludes the proof.
\end{proof}

We can now show the following:

\begin{proposition}\label{prop:rank1equivalenceGeneral}
Indecomposable instruments $\I \in \ins(\Omega, \hi, \hik)$ and $\J \in \ins(\Lambda, \hi, \hv)$ are equivalent ($\I \leftrightarrow \J$) if and only if their induced POVMs are equivalent ($\A^\I \leftrightarrow \A^\J$).
\end{proposition}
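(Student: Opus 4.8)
The plan is to prove both implications, with the forward direction ($\I \leftrightarrow \J \Rightarrow \A^\I \leftrightarrow \A^\J$) being essentially immediate and the converse being the substantial part. For the forward direction, I would simply invoke Proposition \ref{prop:associated-POVMs} twice: since both $\I$ and $\J$ are indecomposable, $\I \to \J$ gives $\A^\J \to \A^\I$ and $\J \to \I$ gives $\A^\I \to \A^\J$, hence $\A^\I \leftrightarrow \A^\J$.

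For the converse, suppose $\A^\I \leftrightarrow \A^\J$. Since $\I$ and $\J$ are indecomposable, each operation has Kraus rank one: write $\I_x(\varrho) = K_x \varrho K_x^*$ and $\J_y(\varrho) = L_y \varrho L_y^*$, so that $\A^\I(x) = K_x^* K_x$ and $\A^\J(y) = L_y^* L_y$. The idea is to build the post-processing instruments $\R^{(x)} \in \ins(\Lambda, \hik, \hv)$ witnessing $\I \to \J$ directly from a post-processing $\nu$ of the POVMs together with the partial isometries supplied by Lemma \ref{lemma:KLrel}. Concretely, from $\A^\I \to \A^\J$ we have a stochastic matrix $\nu = (\nu_{xy})$ with $\A^\J(y) = \sum_x \nu_{xy} \A^\I(x)$; this is not quite enough on its own because we need the relation effect-by-effect, so I would first apply Proposition \ref{prop:pp-equivalent-POVMs} (using that indecomposable POVMs are non-vanishing after discarding zero effects) to get that the post-processings can be chosen supported on proportional effects: $\nu_{xy} \neq 0$ only when $\A^\I(x) = c_{xy} \A^\J(y)$, equivalently $K_x^* K_x = c_{xy} L_y^* L_y$. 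For each such pair Lemma \ref{lemma:KLrel} yields an operator $U_{xy} : \hv \to \hik$ with $K_x = \sqrt{c_{xy}}\, U_{xy} L_y$ and $U_{xy}^* U_{xy} L_y = L_y$ (this last relation holds in both cases of the lemma: for an isometry it is automatic). The candidate post-processing is then $\R^{(x)}_y(\sigma) = \nu_{xy}\, U_{xy}^* \sigma\, U_{xy}$ for those $(x,y)$ with $\nu_{xy} \neq 0$, suitably completed (e.g.\ adding a trash-and-prepare term to absorb the non-trace-preserving remainder) so that $\sum_y \R^{(x)}_y$ is a channel for each $x$.

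The verification then proceeds in two steps. First, check the post-processing identity: $\sum_x \R^{(x)}_y(\I_x(\varrho)) = \sum_x \nu_{xy} U_{xy}^* K_x \varrho K_x^* U_{xy} = \sum_x \nu_{xy} c_{xy} (U_{xy}^* U_{xy} L_y) \varrho (L_y^* U_{xy}^* U_{xy}) = \sum_x \nu_{xy} c_{xy}\, L_y \varrho L_y^*$, and since $\A^\J(y) = L_y^* L_y$ while $\A^\I(x) = c_{xy}^{-1} L_y^* L_y$ on the support of $\nu_{\cdot y}$, the stochasticity $\sum_x \nu_{xy} c_{xy} \cdot (\text{const}) $ must collapse to $1$ — more carefully, from $\A^\J(y) = \sum_x \nu_{xy} \A^\I(x) = \sum_x \nu_{xy} c_{xy}^{-1} \cdot c_{xy} \A^\J(y)$... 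I need to be a little careful here and instead use $\A^\J(y) = \sum_x \nu_{xy}\A^\I(x)$ together with $c_{xy}\A^\J(y) = \A^\I(x)$ to get $\sum_x \nu_{xy} c_{xy}^{-1} = 1$, which is the coefficient that actually appears once one tracks the normalizations correctly. Second, check that each $\R^{(x)}$ is a genuine instrument (completely positive, and $\sum_y \R^{(x)}_y$ trace-preserving after the completion). By symmetry the same construction with the roles of $\I$ and $\J$ swapped gives $\J \to \I$, completing $\I \leftrightarrow \J$.

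The main obstacle I anticipate is the bookkeeping around normalization and the two dimension cases of Lemma \ref{lemma:KLrel}: when $\dim \hik < \dim \hv$ the operator $U_{xy}$ is only a partial isometry, so $U_{xy}^* U_{xy} \neq I$ and one genuinely needs the auxiliary identity $U_{xy}^* U_{xy} L_y = L_y$ rather than $U_{xy}^* U_{xy} = I$; one must make sure the map $\sigma \mapsto U_{xy}^* \sigma U_{xy}$ together with the completion term is trace-nonincreasing and that the completion does not spoil the post-processing identity (it won't, since the completion acts on the part of $\hik$ orthogonal to the range of $K_x$, and $\I_x(\varrho) = K_x \varrho K_x^*$ is supported on that range). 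A secondary subtlety is handling outcomes $x$ with $\A^\I(x) = 0$ (equivalently $\I_x = 0$), which contribute nothing and can be assigned arbitrary instruments $\R^{(x)}$, and ensuring the support condition from Proposition \ref{prop:pp-equivalent-POVMs} is applied only after passing to the non-vanishing restrictions of the POVMs.
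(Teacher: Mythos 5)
Your proposal follows the paper's proof essentially step for step: Proposition~\ref{prop:associated-POVMs} applied twice for the forward direction, and for the converse Proposition~\ref{prop:pp-equivalent-POVMs} to reduce to proportional effects, Lemma~\ref{lemma:KLrel} to extract the (partial) isometries $U_{xy}$, and post-processing instruments of the form $\R^{(x)}_y(\sigma)=\nu_{xy}\,U_{xy}^*\sigma U_{xy}$ completed by a trash-and-prepare term on the orthogonal complement of the range of $U_{xy}$, with the roles swapped for the reverse direction. The only slip is in the normalization bookkeeping you flagged yourself: with your convention $\A^\I(x)=c_{xy}\A^\J(y)$, substituting into $\A^\J(y)=\sum_x\nu_{xy}\A^\I(x)$ gives $\sum_{x:\nu_{xy}>0}\nu_{xy}c_{xy}=1$ (not $\sum_x\nu_{xy}c_{xy}^{-1}=1$), and this is precisely the coefficient that your final display needs to collapse to $1$, so the argument closes correctly.
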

\begin{proof}
Let us assume that the indecomposable instruments $\I$ and $\J$ are equivalent, i.e. both $\I \rightarrow \J$ and $\I \leftarrow \J$ holds.
Using Proposition \ref{prop:associated-POVMs} we get that both $\A^\J \to \A^\I$ and $\A^\I \to \A^\J$ hold, respectively, which means that $\A^\I \leftrightarrow \A^\J$.

For the opposite direction we assume that $\A^\I \leftrightarrow \A^\J$. Without loss of generality we may assume $\dim \hik \geq \dim \hv$ and that $\A^\J$ and $\A^\I$ are non-vanishing so that they only consist of non-zero effects. Our first goal is to show that under the assumption $\A^\I \leftrightarrow \A^\J$, from the relation $\A^\I \to \A^\J$ we get $\I \rightarrow \J$ for indecomposable instruments.

From Prop. \ref{prop:pp-equivalent-POVMs} we have that there is a post-processing matrix $\nu$ such that
\begin{equation}
\label{eq:defpyx}
\A^\J(y) = \sum_{x \in \Omega} \nu_{xy} \A^\I(x)
\end{equation}
 for all $y \in \Lambda$ such that $\nu_{xy} \neq 0$ only if $\A^\I(x)$ is proportional to $\A^\J(y)$ for all $x \in \Omega$ and $y \in \Lambda$. Thus, for all $\nu_{xy}>0$ there exists $c_{xy}>0$ such that
\begin{equation}
\label{eq:defcxy}
\A^\I(x)=c_{xy}\A^\J(y).
\end{equation}

Inserting Eq. (\ref{eq:defcxy}) into Eq. (\ref{eq:defpyx}) we get
$\A^\J(y) = \sum_{x: \nu_{xy}>0}\; \nu_{xy} c_{xy}\;\A^\J(y)$ or equivalently
\begin{align}
\label{eq:relpyxcxy}
\sum_{x: \nu_{xy} >0} \nu_{xy} c_{xy}=1
\end{align}
for all $y \in \Lambda$.

If we denote the Kraus operators of the instruments as
\begin{align}
\label{eq:defkrausIJ}
\I_x (\rho)&= A_{x} \rho A^*_{x} \quad\quad \J_y (\rho)= B_{y} \rho B^*_{y}
\end{align}
then Eq. (\ref{eq:defcxy}) can be rewritten as
\begin{align*}
A^*_{x} A_{x} &= c_{xy} B^*_{y} B_{y}
\end{align*}
Using lemma \ref{lemma:KLrel} we obtain
\begin{align*}
A_x=\sqrt{c_{xy}} U_{xy} B_y,
\end{align*}
where $U_{xy}^* U_{xy}=I_\hv$. On the other hand, $\Pi_{xy}\equiv U_{xy} U_{xy}^*$ can be a nontrivial projector on $\hik$. We denote its complement as $\overline{\Pi}_{xy}=I_\hik - \Pi_{xy}$ and we define orthonormal states $\{\ket{e^{xy}_k}\} \subset  \hik$ such that $\overline{\Pi}_{xy}=\sum_{k=1}^{m_{xy}} \ket{e^{xy}_k}\bra{e^{xy}_k}$ with $m_{xy}=\tr{\overline{\Pi}_{xy}}$.
We note that by construction $U_{xy}^* \ket{e^{xy}_k}=0$ for all $k \in \{1, \ldots, m_{xy}\}$, $x \in \Omega$ and $y \in \Lambda$.
For every $x\in \Omega$ we define instrument $\R^{(x)} \in \ins(\Lambda, \hik, \hv)$ via the following formula
\begin{align*}
\R^{(x)}_y (\rho)= R^{(x)}_y \rho (R^{(x)}_y)^* + \sum_{k=1}^{m_{xy}} Q^{(x)}_{ky} \rho (Q^{(x)}_{ky})^* ,
\end{align*}
where $R^{(x)}_y=\sqrt{\nu_{xy}} U_{xy}^*$ and $Q^{(x)}_{ky}=\sqrt{\nu_{xy}} \ket{\xi}\bra{e^{xy}_k}$ for some fixed unit vector $\ket{\xi} \in \hv$.
Complete positivity of the instrument is obvious from its definition and we check preservation of the trace via the following calculation
\begin{align*}
\sum_{y\in\Lambda} &\left((R^{(x)}_y)^*R^{(x)}_y  + \sum_{k=1}^{m_{xy}} (Q^{(x)}_{ky})^*Q^{(x)}_{ky}\right)= \nonumber \\
&=\sum_{y\in\Lambda} \nu_{xy} \left( U_{xy} U_{xy}^* + \sum_{k=1}^{m_{xy}} \ket{e^{xy}_k}\bra{e^{xy}_k}\right) \nonumber \\
&=\sum_{y\in\Lambda} \nu_{xy} \left( \Pi_{xy} + \overline{\Pi}_{xy}\right)=I_{\hik}.
\end{align*}
Let's now evaluate the post-processing of instrument $\I$ via the instruments $\R^{(x)}$.
We obtain
\begin{align*}
\sum_{x\in\Omega} \R^{(x)}_y(\I_x(\rho))&= \sum_{x\in\Omega}\left(R^{(x)}_y A_x \rho A^*_x (R^{(x)}_y)^* \right. \\
& \ \ \ \left.+ \sum_{k=1}^{m_{xy}} Q^{(x)}_{ky} A_x \rho A^*_x (Q^{(x)}_{ky})^*\right) \\
&= \sum_{x\in\Omega }c_{xy}\nu_{xy} \left[  U_{xy}^*  U_{xy} B_y \rho B^*_y U^*_{xy} U_{xy} \right. \\
& \quad \left.  + \sum_{k=1}^{m_{xy}} \bra{e^{xy}_k}  U_{xy} B_y \rho  B^*_y U^*_{xy}\ket{e^{xy}_k}  \ket{\xi}\bra{\xi} \right] \nonumber \\
&= \sum_{x: \nu_{xy} >0} \left(c_{xy} \nu_{xy}\right)  B_y \rho B^*_y   \nonumber \\
&= B_y \rho B^*_y = \J_y(\rho),
\end{align*}
where we used $U_{xy}^* \ket{e^{xy}_k}=0$, $U_{xy}^*  U_{xy}=I_{\hv}$ and Eqs. (\ref{eq:relpyxcxy}) and (\ref{eq:defkrausIJ}).
Thus, when  $\A^\J \leftrightarrow \A^\I$, we have proved that $\A^\I \to \A^\J$ implies  $\I \rightarrow \J$.

Our next goal is to prove that the equivalence $\A^\J \leftrightarrow \A^\I$ for indecomposable instruments implies also $\J \rightarrow \I$.
We start by explicitly writing out the equations guaranteed by Prop. \ref{prop:pp-equivalent-POVMs} for $\A^\J \to \A^\I$. Thus, there exists a stochastic matrix $\mu$ such that
\begin{equation}
\label{eq:defmuyx}
\A^\I(x) = \sum_{y \in \Lambda} \mu_{yx} \A^\J(y)
\end{equation}
for all $x\in \Omega$ such that $\mu_{yx} \neq 0$ only if $\A^\J(y)$ is proportional to $\A^\I(x)$. Thus, for all $\mu_{yx}>0$ we have that there exists $d_{yx}>0$ such that
\begin{equation}
\label{eq:defdyx}
\A^\J(y)=d_{yx}\A^\I(x).
\end{equation}

Inserting Eq. (\ref{eq:defdyx}) into Eq. (\ref{eq:defmuyx}) we get
$\A^\I(x) = \sum_{y: \mu_{yx}>0}\; \mu_{yx} d_{yx}\;\A^\I(x)$ or equivalently
\begin{align}
\label{eq:relmuyxdxy}
\sum_{y: \mu_{yx}>0}\; \mu_{yx} d_{yx}=1
\end{align}
for all $x \in \Omega$.

Using Kraus operators $A_{x}$, $B_{y}$ for $\I_x$ and $\J_y$ respectively, Eq. (\ref{eq:defdyx}) can be rewritten as
\begin{align*}
B^*_{y} B_{y}= d_{yx} A^*_{x} A_{x}.
\end{align*}
Applying Lemma \ref{lemma:KLrel} to this equation, we obtain
\begin{align*}
B_y=\sqrt{d_{yx}} V_{yx} A_x,
\end{align*}
where $V_{yx}$ is a partial isometry since $\dim \hv \leq \dim \hik$. On one hand we have that $V_{yx} V_{yx}^*=I_{\hv}$, and on the other hand as a consequence of Lemma \ref{lemma:KLrel} we have
\begin{align}
\label{eq:conslemma1}
V^*_{yx} V_{yx} A_x = A_x
\end{align}
for all $x \in \Omega$ and $y \in \Lambda$.
For every $y\in \Lambda$ we define instrument $\G^{(y)} \in \ins(\Omega, \hv, \hik)$ via the following formula
\begin{align*}
\G^{(y)}_x (\rho)= G^{(y)}_x \rho (G^{(y)}_x)^*,
\end{align*}
where $G^{(y)}_x=\sqrt{\mu_{yx}} V_{yx}^*$.
We check the trace preservation by evaluating
\begin{align*}
\sum_{x\in\Omega} (G^{(y)}_x)^* G^{(y)}_x = \sum_{x\in\Omega} \mu_{yx} V_{yx}  V_{yx}^* = \sum_{x\in\Omega} \mu_{yx} I_{\hv}= I_{\hv}.
\end{align*}
Finally we calculate the post-processing of instrument $\J$ via the instruments $\G^{(y)}$.
We obtain
\begin{align*}
\sum_{y\in\Lambda} \G^{(y)}_x(\J_y(\rho))&= \sum_{y\in\Lambda} G^{(y)}_x B_y \rho B^*_y (G^{(y)}_x)^* \\
& =\sum_{y\in\Lambda} \mu_{yx} d_{yx} V_{yx}^*  V_{yx} A_x\rho A_x^* V^*_{yx} V_{yx} \nonumber\\
&=\sum_{y: \mu_{yx}>0}\; (\mu_{yx} d_{yx}) A_x\rho A_x^* \\
&=A_x\rho A_x^* = \I_x(\rho),
\end{align*}
where we used Eqs. (\ref{eq:conslemma1}), (\ref{eq:relmuyxdxy}) and (\ref{eq:defkrausIJ}). This concludes the proof, since we showed that $\A^\J \leftrightarrow \A^\I$ implies $\I \leftrightarrow \J$ for indecomposable instruments.
\end{proof}

From the previous result we see that in the special case when we have two L\"uder's instruments $\I^\A$ and $\I^\B$ for two POVMs $\A$ and $\B$, then $\I^\A$ and $\I^\B$ are equivalent if and only if $\A$ and $\B$ are equivalent. 

For measure-and-prepare instruments we already saw in Example \ref{ex:m-a-p} that if $\I \to \J$ and $\I$ is a measure-and-prepare instrument, then $\A^\I \to \A^\J$. We can now show that this is actually both necessary and sufficient condition in the case when also $\J$ is a measure-and-prepare instrument.

\begin{proposition}
Let  $\I \in \ins(\Omega, \hi, \hik)$ and $\J \in \ins(\Lambda, \hi, \hv)$ be measure-and-prepare instruments. Then $\I \to \J$ if and only if $\A^\I \to \A^\J$.
\end{proposition}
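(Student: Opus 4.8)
The plan is to treat the two implications separately, with the forward one already essentially in hand. For the ``only if'' direction, suppose $\I \to \J$. Since $\I$ is a measure-and-prepare instrument, Example \ref{ex:m-a-p} directly gives a stochastic matrix witnessing $\A^\I \to \A^\J$, so nothing further is required.

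For the ``if'' direction, I would fix the measure-and-prepare forms $\I_x(\varrho) = \tr{\A^\I(x)\varrho}\,\sigma_x$ and $\J_y(\varrho) = \tr{\A^\J(y)\varrho}\,\tau_y$ with $\sigma_x \in \shik$, $\tau_y \in \shv$, together with a stochastic matrix $\nu = (\nu_{xy})_{x \in \Omega, y \in \Lambda}$ realizing $\A^\J(y) = \sum_{x \in \Omega}\nu_{xy}\A^\I(x)$. The key observation is that for a measure-and-prepare instrument the quantum output of $\I$ carries no information about $\varrho$ beyond its trace, so the post-processing instruments can simply discard it and re-prepare the states $\tau_y$ with the appropriate classical weights. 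Concretely I would take $\R^{(x)} \in \ins(\Lambda, \hik, \hv)$ to be the trash-and-prepare instruments $\R^{(x)}_y(\omega) = \nu_{xy}\,\tr{\omega}\,\tau_y$. One checks these are genuine instruments: each $\R^{(x)}_y$ is completely positive and trace-non-increasing, and $\sum_y \R^{(x)}_y(\omega) = \tr{\omega}\sum_y\nu_{xy}\tau_y$ is a trash-and-prepare channel because $\sum_y \nu_{xy} = 1$ makes $\sum_y \nu_{xy}\tau_y$ a state.

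It then remains to verify Eq. \eqref{eq:simulation}, which is a one-line computation: using $\tr{\I_x(\varrho)} = \tr{\A^\I(x)\varrho}$ and $\tr{\sigma_x}=1$,
\[
\sum_{x \in \Omega}\R^{(x)}_y(\I_x(\varrho)) = \sum_{x \in \Omega}\nu_{xy}\,\tr{\A^\I(x)\varrho}\,\tau_y = \tr{\Bigl(\textstyle\sum_{x\in\Omega}\nu_{xy}\A^\I(x)\Bigr)\varrho}\,\tau_y = \tr{\A^\J(y)\varrho}\,\tau_y = \J_y(\varrho)
\]
for all $\varrho \in \lh$ and $y \in \Lambda$, so $\I \to \J$.

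There is no real obstacle here; the only point requiring a moment of care is confirming that the proposed $\R^{(x)}$ are valid instruments, which is immediate from $\nu$ being stochastic and the $\tau_y$ being states. Conceptually the content is just that for measure-and-prepare instruments the quantum part of a post-processing can always be taken trivial, so the question reduces exactly to the POVM post-processing of Section \ref{sec:POVMs}.
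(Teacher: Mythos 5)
Your proposal is correct and follows essentially the same route as the paper: the forward direction is delegated to Example \ref{ex:m-a-p}, and the converse is handled by taking the post-processing instruments to be the trash-and-prepare maps $\R^{(x)}_y(\omega) = \nu_{xy}\tr{\omega}\tau_y$ (the paper writes the same maps restricted to states) and verifying Eq. \eqref{eq:simulation} by the same one-line computation. Your explicit check that the $\R^{(x)}$ are valid instruments is a small but welcome addition of care over the paper's presentation.
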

\begin{proof}
Since $\I$ and $\J$ are measure-and-prepare, then there exists POVMs $\A \in \obs(\Omega, \hi)$ and $\B \in \obs(\Lambda,\hi)$ as well as states $\{\sigma_x\}_{x \in \Omega}\subset \shik$  and $\{\xi_y\}_{y \in \Lambda} \subset \shv$ such that
\begin{align*}
\I_x(\varrho) &= \tr{\A(x) \varrho} \sigma_x, \\
\J_y(\varrho) &= \tr{\B(y) \varrho} \xi_y
\end{align*}
for all $x \in \Omega$ and $y \in \Lambda$. We see that $\A^\I = \A$ and $\A^\J = \B$. By following the steps of Example \ref{ex:m-a-p}, we see that if $\I \to \J$, then $\A = \A^\I \to \A^\J = \B$.

Now let $\A^\I \to \A^\J$, i.e., $\A \to \B$ so that $\B(y) = \sum_{x \in \Omega} \mu_{xy} \A(x)$ for all $y \in \Lambda$ for some post-processing $\mu = (\mu_{xy})_{x \in \Omega, y \in \Lambda}$. Let us define instruments $\R^{(x)} \in \ins(\Lambda, \hik, \hv)$ by setting $\R^{(x)}_y(\varrho) = \mu_{xy} \xi_y$ for all $x \in \Omega$, $y \in \Lambda$ and $\varrho \in \shik$. We now see that
\begin{align*}
\sum_{x \in \Omega} \R^{(x)}_y(\I_x(\varrho)) &=\sum_{x \in \Omega} \tr{\A(x) \varrho} \R^{(x)}_y(\sigma_x) \\
&= \tr{\sum_{x \in \Omega} \mu_{xy} \A(x) \varrho} \xi_y \\
&= \tr{\B(y)\varrho} \xi_y = \J_y(\varrho)
\end{align*}
for all $y \in \Lambda$. Hence, $\I \to \J$.
\end{proof}

\section{Simulation of instruments}\label{sec:simulation}
The simulation scheme describes a process of obtaining new devices out of some existing devices by the means of operational manipulations. For example, in the case of measurements (see \cite{GuBaCuAc17, OsGuWiAc17, FiHeLe18, OsMaPu19}), from a set of measurement devices new observables can be obtained by classical means of mixing and/or post-processing the classical outcomes. This can be achieved by giving probability for each device according to which we use it in a measurement and/or by (classically) post-processing the obtained measurement outcomes.

The previously described procedure can be used to simulate a measurement device that is not directly at hand and which may be hard to implement by itself. One can consider which observables can be obtained from a single POVM via post-processing, which effectively characterizes joint measurability \cite{HeMiZi16}, and in this sense the above concept of measurement simulability can be considered as a generalization of joint measurability. One can also ask, which POVMs are needed to simulate all observables (simulation irreducible measurements \cite{FiHeLe18}), or what can one get out of a given observables with specific properties (projective measurements \cite{OsGuWiAc17, OsMaPu19}, effectively dichotomic measurements \cite{KlCa16, KlVeCa17, Huetal18, FiGuHeLe20}).

Next, we will define simulation of instruments analogously to the simulation of POVMs by using the post-processing that was defined and studied in the previous sections. We start by briefly recalling mixing of instruments.

\subsection{Mixing of quantum instruments}

For a fixed outcome set $\Omega$ and Hilbert spaces $\hi$ and $\hik$ the set of instruments $\ins(\Omega, \hi, \hik)$ is convex. Namely, if we have devices described by instruments $\{\I^{(i)}\}_{i=1}^n \subset \ins(\Omega, \hi, \hik)$, then we can choose to use device $\I^{(i)}$ with probability $p_i$ in every round of the experiment with some probability distribution $(p_i)_{i=1}^n$. The new instrument $\I$ that is formed as a mixture is then defined as
\begin{equation*}
\I_x = \sum_{i=1}^n p_i \I^{(i)}_x
\end{equation*}
for all $x \in \Omega$. Note that we can always consider instruments to have the same outcome sets by just adding zero outcomes to instruments if needed, but they still have to have the same input and output spaces $\lh$ and $\lk$.

We can also consider other type of mixing where we also keep track of the instrument that was used in each round of the experiment. Then we consider the mixed instrument to have to outcomes, first outcome indicating the instrument that was used and the second giving the outcome that was obtained from the instrument that was used. Thus, in this case we define the new instrument $\tilde{\I}$ to have an outcome set $\{1, \ldots, n\}\times  \Omega$ so that
\begin{equation*}
\tilde{\I}_{(i,x)} = p_i \I^{(i)}_x
\end{equation*}
for all $i \in \{1, \dots, n\}$ and $x \in \Omega$. We note that the traditional mixture $\I$ where we do not keep track of the measured instrument can be obtained as a post-processing of the instrument $\tilde{\I}$, namely $\I_x = \sum_i \tilde{\I}_{(i,x)}$ for all $x \in \Omega$.

Because of the convex structure of $\ins(\Omega, \hi, \hik)$ we can consider the extreme points of the set.

\begin{definition}
An instrument $\I \in \ins(\Omega, \hi, \hik)$ is \emph{extreme} if a convex sum decomposition $\I = \lambda \J + (1-\lambda) \K$ with some  other instruments $\J, \K \in \ins(\Omega, \hi ,\hik)$ and some $\lambda \in (0,1)$ implies that $\I = \J = \K$.
\end{definition}

The extreme instruments were characterized in \cite{DAPeSe11}.

\begin{proposition}
An instrument $\I \in \ins(\Omega, \hi, \hik)$ with a minimal Kraus decomposition $\I_x(\varrho) = \sum_i K_{ix} \varrho K_{ix}^*$ for all $x \in \Omega$ is extreme if and only if the set $\{K^*_{ix} K_{jx}\}_{i,j,x}$ is linearly independent.
\end{proposition}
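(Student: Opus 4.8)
The plan is to adapt the Choi-type criterion for extreme channels, applied outcome by outcome. Two standard facts carry most of the weight. First, for a fixed outcome $x$ the operators $\{K_{ix}\}_i$ are linearly independent (this is precisely what minimality of the Kraus decomposition means), so a map $\varrho\mapsto\sum_{i,j}M_{ij}K_{ix}\varrho K_{jx}^*$ is completely positive exactly when the matrix $M=(M_{ij})$ is positive semidefinite (diagonalize $M$ to rewrite the map as a sum of terms $L\varrho L^*$), and two such maps with coefficient matrices $M,M'$ agree only if $M=M'$. Second, the Radon--Nikodym theorem for completely positive maps: any operation $\mathcal N$ dominated by $\I_x$ (i.e.\ with $\I_x-\mathcal N$ completely positive) is automatically of the form $\mathcal N(\varrho)=\sum_{i,j}M_{ij}K_{ix}\varrho K_{jx}^*$ with $M\geq0$; in finite dimension this follows by comparing Choi matrices and using the linear independence of the $\{K_{ix}\}_i$.

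For sufficiency, suppose $\{K_{ix}^*K_{jx}\}_{i,j,x}$ is linearly independent and $\I=\lambda\J+(1-\lambda)\K$ with $\lambda\in(0,1)$ and $\J,\K\in\ins(\Omega,\hi,\hik)$. For each $x$ both $\lambda\J_x$ and $(1-\lambda)\K_x$ are dominated by $\I_x$, so the Radon--Nikodym fact gives positive matrices $M^{(x)},N^{(x)}$ with $\lambda\J_x(\varrho)=\sum_{ij}M^{(x)}_{ij}K_{ix}\varrho K_{jx}^*$ and $(1-\lambda)\K_x(\varrho)=\sum_{ij}N^{(x)}_{ij}K_{ix}\varrho K_{jx}^*$; adding these and comparing with $\I_x(\varrho)=\sum_i K_{ix}\varrho K_{ix}^*$ forces $M^{(x)}+N^{(x)}=I$. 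Now write out trace preservation of the channels $\sum_x\J_x$ and $\sum_x\I_x$, obtaining $\sum_x\sum_{ij}M^{(x)}_{ij}K_{jx}^*K_{ix}=\lambda I_\hi$ and $\sum_x\sum_i K_{ix}^*K_{ix}=I_\hi$. Subtracting $\lambda$ times the second from the first gives $\sum_x\sum_{ij}(M^{(x)}_{ij}-\lambda\delta_{ij})K_{jx}^*K_{ix}=0$; since $\{K_{jx}^*K_{ix}\}_{i,j,x}$ is, as a set, the same family as $\{K_{ix}^*K_{jx}\}_{i,j,x}$ and hence linearly independent, all coefficients vanish, i.e.\ $M^{(x)}=\lambda I$ for every $x$. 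Therefore $\J_x=\I_x$ for all $x$, and likewise $\K=\I$, so $\I$ is extreme.

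For necessity I argue the contrapositive: if $\{K_{ix}^*K_{jx}\}_{i,j,x}$ is linearly dependent, I construct a nontrivial convex decomposition of $\I$. A dependence relation can be rewritten, by transposing coefficient indices, in the form $\sum_x\sum_{ij}C^{(x)}_{ij}K_{jx}^*K_{ix}=0$ with the tuple $(C^{(x)})_x$ not identically zero; taking the adjoint of this relation (and relabelling $i\leftrightarrow j$) produces one of the same shape, so separating Hermitian and anti-Hermitian parts lets me assume each $C^{(x)}$ is Hermitian while the tuple is still not identically zero. Choose $\epsilon>0$ small enough that $I\pm\epsilon C^{(x)}\geq0$ for all $x$ and set $\J^{\pm}_x(\varrho)=\sum_{ij}(\delta_{ij}\pm\epsilon C^{(x)}_{ij})K_{ix}\varrho K_{jx}^*$. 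Each $\J^{\pm}_x$ is completely positive because $I\pm\epsilon C^{(x)}\geq0$, and $\sum_x\J^{\pm}_x$ is trace preserving exactly because $\sum_x\sum_{ij}C^{(x)}_{ij}K_{jx}^*K_{ix}=0$, so $\J^{\pm}$ are genuine instruments with $\I=\tfrac12\J^++\tfrac12\J^-$. Finally $\J^+\neq\J^-$, since at some outcome $x$ their difference $\varrho\mapsto2\epsilon\sum_{ij}C^{(x)}_{ij}K_{ix}\varrho K_{jx}^*$ is nonzero: were it identically zero for every $x$, linear independence of the $\{K_{ix}\}_i$ (via Choi matrices) would force $C^{(x)}=0$ for all $x$, contradicting the choice of $(C^{(x)})_x$. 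Hence $\I$ is not extreme.

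The step I expect to demand the most care is the index bookkeeping in the necessity part: one must track the distinction between $K_{ix}^*K_{jx}$ and $K_{jx}^*K_{ix}$ and check that, after transposing coefficients and forming real and imaginary combinations, one obtains a Hermitian dependence in precisely the ordering that makes $\sum_x\J^{\pm}_x$ trace preserving. The other genuinely needed input is the Radon--Nikodym structure of operations dominated by $\I_x$, but in finite dimension this is routine once the minimal Kraus operators of $\I_x$ are known to be linearly independent.
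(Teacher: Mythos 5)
Your proof is correct. Note that the paper itself does not prove this proposition: it is quoted as a known result with a citation to D'Ariano, Perinotti and Sedl\'ak (reference \cite{DAPeSe11}), so there is no in-paper argument to compare against. Your argument is essentially the standard Choi-type characterization used in that reference: the Radon--Nikodym step gives the coefficient matrices $M^{(x)}$, trace preservation plus linear independence of $\{K_{jx}^*K_{ix}\}_{i,j,x}$ forces $M^{(x)}=\lambda I$ for sufficiency, and the Hermitian perturbation $\delta_{ij}\pm\epsilon C^{(x)}_{ij}$ built from a dependence relation gives the nontrivial convex decomposition for necessity. The index bookkeeping you flag (passing between $K_{ix}^*K_{jx}$ and $K_{jx}^*K_{ix}$ and extracting a Hermitian dependence) does work out exactly as you describe, since the adjoint of a dependence relation is again a dependence relation of the same shape.
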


\subsection{Simulation scheme}

Similarly to the measurement devices, in the case of quantum instruments, we consider the simulation scheme to be the following: Let $\mathfrak{J}$ be a collection of quantum instruments with outcome set $\Omega$ from $\lh$ to some other output spaces which can be different for different instruments. For any finite subset $\{\J^{(i)}\}_{i=1}^n \subseteq \mathfrak{J}$ we choose an instrument $\J^{(i)} \subseteq \ins(\Omega, \hi, \hik_i)$ with probability $p_i$ according to some probability distribution $(p_i)_{i=1}^n$, measure it, and after obtaining an outcome $(i,x)$ by keeping track of the instrument that we used, we send the output state to another instrument $\mathcal{R}^{(i,x)} \in \ins(\Lambda, \hik_i, \hv)$ according to the classical output $x$ of $\J^{(i)}$. Thus, we obtain a new instrument $\I \in \ins(\Lambda, \hi, \hv)$ defined by
\begin{equation*}
\I_y(\varrho) = \sum_{i=1}^n p_i \sum_{x \in \Omega} \mathcal{R}^{(i,x)}_y\left( \J^{(i)}_x(\varrho) \right)
\end{equation*}
for all $y \in \Lambda$. The set of all instrument obtained from $\mathfrak{J}$ by this method with some finite subset of $\mathfrak{J}$, some probability distribution $(p_i)_i$ and some post-processing instruments $\mathcal{R}^{(i,x)}$ is denoted by $\simu{\mathfrak{J}}$.

In \cite{FiHeLe18} it was shown that in the case of POVMs (and more generally measurements in general probabilistic theories) there is a collection of POVMs that can be used to simulate all other POVMs. Thus, just as in the case of measurements, we can try to reduce the problem of simulability into a specific class of instruments.
\begin{definition}
An instrument $\I$ is simulation irreducible if for any set of instruments $\mathfrak{J}$ such that $\I \in \simu{\mathfrak{I}}$ there exists an instrument $\J \in \mathfrak{J}$ such that $\I \leftrightarrow \J$.
\end{definition}

It is straightforward to verify that the characterization of simulation irreducibility follows the same proof as in \cite{FiHeLe18}.
\begin{proposition}
An instrument is simulation irreducible if and only if it is post-processing clean and post-processing equivalent to an extreme instrument.
\end{proposition}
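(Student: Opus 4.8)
The plan is to reproduce, for instruments, the two–part argument used for POVMs in \cite{FiHeLe18}, with the post-processing relation of Definition~\ref{def:pp-instruments} and the mixing bookkeeping of the previous subsection in place of classical post-processing and mixing of POVMs.

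For the ``only if'' direction I would assume $\I$ is simulation irreducible. To see that $\I$ is post-processing clean, take any instrument $\J$ with $\J \to \I$: the relation $\J \to \I$ is precisely the statement $\I \in \simu{\{\J\}}$, i.e.\ the one-instrument instance of the simulation scheme in which $\J$ is used with probability $1$ and then post-processed. Simulation irreducibility applied to $\{\J\}$ then forces $\I \leftrightarrow \J$, in particular $\I \to \J$, so $\I$ is post-processing clean. To see that $\I$ is equivalent to an extreme instrument, observe that the instruments with the outcome set, input space and output space of $\I$ form a compact convex subset of a finite-dimensional real vector space, so $\I = \sum_j q_j \mathcal{E}^{(j)}$ for finitely many extreme instruments $\mathcal{E}^{(j)}$ and a probability distribution $(q_j)_j$. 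A plain convex mixture is the special case of the simulation scheme with the trivial post-processings $\mathcal{R}^{(j,x)}_y = \delta_{xy}\, id$, so $\I \in \simu{\{\mathcal{E}^{(j)}\}_j}$, and simulation irreducibility produces some $\mathcal{E}^{(j_0)}$, which is extreme, with $\I \leftrightarrow \mathcal{E}^{(j_0)}$.

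For the ``if'' direction I would assume $\I$ is post-processing clean and $\I \leftrightarrow \I^{\mathrm{ext}}$ for some extreme instrument $\I^{\mathrm{ext}}$, and suppose $\I \in \simu{\mathfrak{J}}$. Then there are a finite family $\{\J^{(i)}\}_{i=1}^n \subseteq \mathfrak{J}$ (zero-weight ones discarded), a probability distribution $(p_i)_i$ and post-processing instruments $\mathcal{R}^{(i,x)}$ with $\I_y = \sum_i p_i \sum_x \mathcal{R}^{(i,x)}_y \circ \J^{(i)}_x$ for all $y$. Putting $\mathcal{M}^{(i)}_y := \sum_x \mathcal{R}^{(i,x)}_y \circ \J^{(i)}_x$, one checks that $\mathcal{M}^{(i)}$ is a genuine instrument, that $\{\mathcal{R}^{(i,x)}\}_x$ witnesses $\J^{(i)} \to \mathcal{M}^{(i)}$, and that $\I = \sum_i p_i \mathcal{M}^{(i)}$. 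Next I would fix post-processing instruments $\{\mathcal{S}^{(y)}\}_y$ realizing $\I \to \I^{\mathrm{ext}}$, i.e.\ $\I^{\mathrm{ext}}_z = \sum_y \mathcal{S}^{(y)}_z \circ \I_y$, and set $\mathcal{N}^{(i)}_z := \sum_y \mathcal{S}^{(y)}_z \circ \mathcal{M}^{(i)}_y$. Each $\mathcal{N}^{(i)}$ is then an instrument with the outcome set and output space of $\I^{\mathrm{ext}}$, it is a post-processing of $\mathcal{M}^{(i)}$ and hence, by transitivity, of $\J^{(i)}$, and $\I^{\mathrm{ext}} = \sum_i p_i \mathcal{N}^{(i)}$. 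Extremality of $\I^{\mathrm{ext}}$ forces $\mathcal{N}^{(i)} = \I^{\mathrm{ext}}$ for every $i$, whence $\J^{(i)} \to \mathcal{N}^{(i)} = \I^{\mathrm{ext}} \leftrightarrow \I$, so $\J^{(i)} \to \I$; since $\I$ is post-processing clean this gives $\I \to \J^{(i)}$, i.e.\ $\I \leftrightarrow \J^{(i)} \in \mathfrak{J}$, which is exactly what simulation irreducibility requires.

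I expect the only genuinely delicate step to be the application of extremality in the ``if'' direction: one must verify that $\I^{\mathrm{ext}} = \sum_i p_i \mathcal{N}^{(i)}$ is a convex decomposition \emph{inside a single instrument set}, i.e.\ that each $\mathcal{N}^{(i)}$ is trace-preserving as a total map, carries exactly the outcome set of $\I^{\mathrm{ext}}$, and maps into the same output space as $\I^{\mathrm{ext}}$ --- the $\mathcal{R}^{(i,x)}$ may have different domains $\hik_i$, but after composition with the $\J^{(i)}_x$ and the $\mathcal{S}^{(y)}$ everything lands in the output space of $\I^{\mathrm{ext}}$. Once that bookkeeping is pinned down, the rest is a direct transcription of the POVM argument: the extreme-point decomposition uses only compactness (Carath\'eodory), and the remaining steps use only transitivity of $\to$ and the identification of one-element simulation with post-processing.
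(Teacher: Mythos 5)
Your argument is correct and is exactly the adaptation the paper has in mind: the paper omits the proof entirely, stating only that it ``follows the same proof as in \cite{FiHeLe18}'', and your two-part argument (one-element simulation $=$ post-processing plus Carath\'eodory decomposition for the ``only if'' direction; pushing the convex decomposition through the post-processing $\I \to \I^{\mathrm{ext}}$ and invoking extremality for the ``if'' direction) is precisely that transcription, with the bookkeeping about common outcome sets and output spaces correctly identified as the only point needing care.
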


Since the identity channel is extremal and post-processing clean instruments are exactly those that are equivalent with the identity channel, the set of simulation irreducible instruments coincides with the set of post-processing clean instruments. Thus, we get the following for free from Cor. \ref{cor:identity}.

\begin{proposition}
Every instrument can be simulated by any instrument that is equivalent with the identity channel.
\end{proposition}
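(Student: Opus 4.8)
The plan is to reduce the claim to Corollary~\ref{cor:identity} together with the observation that the post-processing relation of Definition~\ref{def:pp-instruments} is nothing but the one-device special case of the simulation scheme. First I would make that reduction explicit: suppose $\J \in \ins(\Omega, \hi, \hik)$ and $\I \in \ins(\Lambda, \hi, \hv)$ satisfy $\J \to \I$, so that there are post-processing instruments $\{\R^{(x)}\}_{x \in \Omega} \subset \ins(\Lambda, \hik, \hv)$ with $\I_y(\varrho) = \sum_{x \in \Omega} \R^{(x)}_y(\J_x(\varrho))$ for all $y \in \Lambda$ and $\varrho \in \lh$. Choosing the one-element collection $\mathfrak{J} = \{\J\}$, the trivial probability distribution ($n = 1$, $p_1 = 1$), and the post-processing instruments $\R^{(1,x)} := \R^{(x)}$, the defining formula for $\simu{\mathfrak{J}}$ collapses to precisely $\I_y(\varrho) = \sum_{x \in \Omega} \R^{(1,x)}_y(\J_x(\varrho))$. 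Hence $\I \in \simu{\{\J\}}$, and therefore $\simu{\{\J\}}$ contains every instrument that can be post-processed from $\J$.

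With this in place the Proposition follows in one line: if $\J$ is an instrument post-processing equivalent with the identity channel $id$ on $\hi$, then Corollary~\ref{cor:identity} gives $\J \to \I$ for every instrument $\I$ with input space $\hi$, so $\I \in \simu{\{\J\}} \subseteq \simu{\mathfrak{J}}$ for any collection $\mathfrak{J}$ containing $\J$. Thus every instrument is simulated by $\J$.

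I do not expect a genuine obstacle here; all the substance already sits in Corollary~\ref{cor:identity} (equivalently, in the explicit construction of Proposition~\ref{prop:id-equivalence-class} showing that $id$, and anything equivalent to it, post-processes every instrument). The only thing to be careful about is bookkeeping: the simulation scheme places no restriction on the intermediate output space $\hik$ and allows the post-processing instruments $\R^{(i,x)}$ to be arbitrary, so the data furnished by Corollary~\ref{cor:identity} is automatically admissible, and outcome sets can be matched by padding with zero operations as in the discussion of mixing. It is worth noting that the statement is also consistent with the preceding characterization of simulation irreducibility: since $id$ is extreme and post-processing clean it is simulation irreducible, and the set of simulation irreducible instruments is exactly the equivalence class of $id$, which is therefore precisely the class of ``universal simulators.''
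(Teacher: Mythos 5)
Your proposal is correct and matches the paper's own reasoning: the paper derives this proposition directly from Corollary~\ref{cor:identity} by observing that post-processing is the single-instrument, trivial-mixture special case of the simulation scheme, exactly as you do. No gaps.
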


Furthermore, it is easy to see that the extreme elements of the equivalence class of the identity channel are just the isometric channels. Namely, if $\I\in \ins(\Omega, \hi, \hik)$ is equivalent with the identity channel, then by Prop. \ref{prop:id-equivalence-class} we have that $\I_x(\varrho) = \sum_{i=1}^{n_x} p_{xi} V_{xi} \varrho V_{xi}^*$ for all $x \in \Omega$ and $\varrho \in \sh$ for some probability distribution $(p_{xi})_{x \in \Omega,i \in \{1, \ldots, n_x\}}$ and some isometries $V_{xi}: \hi \to \hik$ such that $V^*_{xj} V_{xi} = 0$ for all $i \neq j$ for all $x \in \Omega$. We notice that if we define instruments $\mathcal{V}^{(i,y)}\in \ins(\Omega, \hi,\hik)$ by setting $\mathcal{V}^{(i,y)}_{x}(\varrho) = \delta_{xy} V_{yi} \varrho V^*_{yi}$, we see that then $\I = \sum_{y \in \Omega} \sum_{i=1}^{n_y} p_{yi} \mathcal{V}^{(i,y)}$. Thus, $\I$ is extreme if and only if the probability distribution is trivial, i.e., $\I$ has only one outcome so that it is an isometric channel, i.e., $\I(\varrho) = V \varrho V^*$ for all $\varrho \in \sh$ for some isometry $V: \hi \to \hik$.

%%%%%%%%%%%%%%%%%%%%%%%%%%%%%%%%%%%%%%%%%%
\section{Summary} \label{sec:summary}
%%%%%%%%%%%%%%%%%%%%%%%%%%%%%%%%%%%%%%%%%%%

Motivated by the post-processing (and simulability) of POVMs, the first aim of this manuscript is to mathematically correctly define the post-processing of quantum instruments and  characterize the partial order that it induces on the set of equivalence classes of instruments. In particular, we characterize the least and greatest element and characterize their equivalence classes.

We see that similarly to the case of quantum channels in \cite{HeMi13}, every quantum instrument can be post-processed to a so-called trash-and-prepare instrument (the least element) that simply ignores the input state and prepares a new state as output. In accordance with this similarity, we also saw that every instrument can be post-processed from instruments that are equivalent with the identity channel (the greatest element). We find that those instruments consist of randomly chosen isometries with orthogonal output ranges.

Furthermore, we consider two other important classes of instruments. First, the indecomposable instruments have the mathematical advantage that each operation of the instrument consists of only one Kraus operator (for example L\"uders instruments), implying that any instrument can be post-processed from some indecomposable instruments. For the converse we show that any instrument with Kraus operators that have orthogonal output ranges for each outcome (however many Kraus operators it may have) can be post-processed to an indecomposable (detailed) instrument. 

Our conjecture is that this condition is also necessary for an instrument to be equivalent with an indecomposable (detailed) instrument. However, we have not found a general proof for this claim and we leave this as an open question for further work. 

The second class of instruments we study are the measure-and-prepare instruments, which measure some observable on the input  and based on the outcome they prepare a new state. We show how an instrument that can be post-processed from a measure-and-prepare instrument must look like and we show that only measure-and-prepare instruments that have indecomposable (rank-1) induced POVMs are post-processing equivalent with an indecomposable (detailed) instrument. As a corollary we see that actually any instrument with indecomposable induced POVM falls into this class.

For these two classes of instruments, we draw some connections between post-processing of quantum instruments and post-processing of their induced POVMs. In particular, we see that two indecomposable instruments are equivalent if and only if their induced POVMs are equivalent, and that a measure-and-prepare instrument can be post-processed from another measure-and-prepare instrument if and only if the induced POVM of the former instrument can be post-processed from the induced POVM of the latter instrument.

Finally, we use the partial order introduced above to study simulability of instruments. We see that every instrument can be simulated by an instrument from the equivalence class of the identity instrument. We find that the extreme simulation irreducible instruments are the isometric channels.

\acknowledgements{
L.L. acknowledges support from University of Turku Graduate School (UTUGS) and the Academy of Finland via the Centre of Excellence program (Grant No. 312058). M.S. was supported by projects APVV-18-0518 (OPTIQUTE), VEGA 2/0161/19 (HOQIT) and QuantERA project HIPHOP. M.S. was further supported by The  Ministry  of  Education,  Youth  and  Sports  of the   Czech   Republic   from   the   National   Programme of   Sustainability   (NPU   II);   project   IT4 Innovations excellence in science - LQ1602 and through the support of Grant No. 61466 from the John Templeton Foundation, as part of the “The Quantum Information Structure of Spacetime (QISS)” Project (qiss.fr). The opinions expressed in this publication are those of the author(s) and do not necessarily reflect the views of the John Templeton Foundation.}

\end{document}